\newcommand{\vu}{\boldsymbol{u}}
\newcommand{\vv}{\boldsymbol{v}}
\newcommand{\vx}{\boldsymbol{x}}
\newcommand{\vy}{\boldsymbol{y}}
\newcommand{\Z}{[q]^n}
\newcommand{\E}{\mathbb{E}}
\newcommand{\ee}{\epsilon}
\newcommand{\beq}{\begin{equation}}
\newcommand{\eeq}{\end{equation}}
\theoremstyle{plain}
\newtheorem{thm}{Theorem}[section]
\newtheorem{lem}[thm]{Lemma}
\newtheorem{prop}[thm]{Proposition}
\newtheorem{rem}[thm]{Remark}
\newtheorem{qu}[thm]{Question}
\newtheorem{claim}[thm]{Claim}
\newcommand{\s}{\subseteq}
\newcommand{\bi}{\binom}
\newcommand{\fr}{\frac}
\title{Improved Gilbert-Varshamov bounds for hopping cyclic codes and optical orthogonal codes\footnote{This project is supported by the National Key Research and Development Program of China under Grant Nos. 2020YFA0712100 and 2018YFA0704703, the National Natural Science Foundation of China under Grant Nos. 11971325, 12231014, and 12101364, the Natural Science Foundation of Shandong Province under Grant No. ZR2021QA005, and the Beijing Scholars Program.}}
\author{Chenyang Zhang\thanks{C. Zhang is with the Research Center for Mathematics and Interdisciplinary Sciences, Shandong University, Qingdao 266237, China (e-mail: chener@mail.sdu.edu.cn).}, Chong Shangguan\thanks{C. Shangguan is with the Research Center for Mathematics and Interdisciplinary Sciences, Shandong University, Qingdao 266237, China, and also with the Frontiers Science Center for Nonlinear Expectations, Ministry of Education, Qingdao 266237, China (e-mail: theoreming@163.com).}, and Gennian Ge\thanks{G. Ge is with the School of Mathematical Sciences, Capital Normal University, Beijing 100048, China (e-mail: gnge@zju.edu.cn).}
}
\begin{document}

\maketitle

\begin{abstract}
  Hopping cyclic codes (HCCs) are (non-linear) cyclic codes with the additional property that the $n$ cyclic shifts of every given codeword are all distinct, where $n$ is the code length. Constant weight binary hopping cyclic codes are also known as optical orthogonal codes (OOCs). HCCs and OOCs have various practical applications and have been studied extensively over the years.

  The main concern of this paper is to present improved Gilbert-Varshamov type lower bounds for these codes, when the          minimum distance is bounded below by a linear factor of the code length. For HCCs, we improve the previously best known lower bound of Niu, Xing, and Yuan by a linear factor of the code length. For OOCs, we improve the previously best known lower bound of Chung, Salehi, and Wei, and Yang and Fuja by a quadratic factor of the code length.  As by-products, we also provide improved lower bounds for frequency hopping sequences sets and error-correcting weakly mutually uncorrelated codes. Our proofs are based on tools from probability theory and graph theory, in particular the McDiarmid's inequality on the concentration of Lipschitz functions and the independence number of locally sparse graphs.
\end{abstract}

{\it Keywords.} Gilbert-Varshamov bound; non-linear cyclic codes; hopping cyclic codes; optical orthogonal codes; frequency hopping sequences sets; error-correcting weakly mutually uncorrelated codes 




\section{Introduction}

Given integers $q$, $n$ and $d$, estimating the maximum size of $q$-ary codes of length $n$ and minimum distance $d$ is a fundamental problem in coding theory. The Gilbert–Varshamov bound (GV bound for short) is a classic lower bound on the size of codes. For fixed $q$, $n\rightarrow\infty$, and $d$ bounded below by a linear factor of $n$, improving upon the GV bound substantially is a well-known difficult task. In this paper we will present improved GV-type bounds for several classes of non-linear cyclic codes.


To move forward let us begin with some needed definitions. For a positive integer $q$, let $[q] =\{0,1,\ldots,q-1\}$. A vector $\vx\in\Z$ is denoted by $\vx=(x_1,\ldots,x_n)$. For two vectors $\vx,\vy\in\Z$, the {\it Hamming distance} $d(\vx,\vy)$ is the number of coordinates where they differ, namely, $d(\vx,\vy)=|\{1\le i\le n:x_i\neq y_i\}|$. A {\it code} of length $n$ and alphabet size $q$ is a subset of $\Z$, whose elements are called {\it codewords}. The {\it minimum distance} $d(C)$ of a code $C\subseteq\Z$ is defined to be $d(C):=\min\{d(\vx,\vy):\vx,\vy\in C,\vx\neq \vy\}$. A $q$-ary code with length $n$, size $M$, and minimum distance $d$ is denoted as an {\it $(n,M,d)_q$-code}.


Next, let us briefly recall some known GV-type bounds. 

\subsection{GV-type bounds}

\paragraph{Generic codes.}  For an integer $1\le t\le n$ and a vector $\vx\in\Z$, let $B(\vx,t):=\{\vy\in\Z:d(\vx,\vy)\le t\}$ denote the {\it Hamming ball} of radius $t$ centered at $\vx$. Given a radius $t$, it can be easily seen that for every $\vx\in\Z$, the volume of the Hamming ball $B(\vx,t)$ is independent of $\vx$, which satisfies that
$$|B(\vx,t)|=\sum_{i=0}^t\bi{n}{i}(q-1)^i=:Vol_q(n,t).$$

The GV bound, proved independently by Gilbert \cite{GV-G} and Varshamov \cite{GV-V}, states that there exist $(n,M,d)_q$-codes with
\begin{align}\label{eqa:GV}
  M\ge\fr{q^n}{Vol_q(n,d-1)}.
\end{align}
For binary codes and $d/n\le 0.499 $, Jiang and Vardy \cite{JV04} improved the lower bound in \eqref{eqa:GV} by a linear factor of $n$. Based on their work, Vu and Wu \cite{VW05} showed that for every $q\ge 2$ and $\tau \le d/n\le 1-1/q-\ee$, where $\tau, \ee$ are some absolute reals, there exist $(n,M,d)_q$-codes with
\begin{align}\label{eqa:VW}
  M=\Omega\left(\fr{nq^n}{Vol_q(n,d-1)}\right).
\end{align}

\paragraph{Non-linear cyclic codes.}
For every $\vx=(x_1,\ldots,x_n)\in \Z$ and $0\le i\le n-1$, we call the vector $\pi_i(\vx):=(x_{i+1},x_{i+2},\ldots,x_{i+n})$\footnote{Throughout this paper, the addition in the subscripts of $x$ is calculated modulo $n$.} the $i$-th {\it cyclic shift} of $\vx$. Let $C(\vx):=\{\pi_i(\vx): 0\le i\le n-1\}$ denote the set consisting of all cyclic shifts of $\vx$. Note that $C(\vx)$ is viewed as a multi-set. A code $C\subseteq\Z$ is said to be a {\it cyclic code} if for every $\vx\in C$, $C(\vx)\s C$.
In the literature, there are a number of works devoting to the constructions of linear cyclic codes with minimum distance ranging from a constant to a sublinear function of $n$, say, $O(\frac{n}{\ln n})$ (see, e.g. \cite[Chapter 8]{roth_2006}). However, when minimum distance is bounded below by a linear factor of $n$, understanding whether there exist {\it asymptotically good} linear cyclic codes is a longstanding open question in coding theory (see \cite{IMME17,DSJP15}).

For {\it non-linear} cyclic codes, the question above has been solved only recently. Haviv, Langberg, Schwartz, and Yaakobi \cite{IMME17} proved that for every prime code length, there exist binary cyclic codes asymptotically attaining the GV bound. Later, Niu, Xing, and Yuan \cite{niu-xing-yuan-fhs} showed that for every $q$ and $n$
there exist  $(n,M,d)_q$-cyclic codes with
\beq\label{eqa: HCC-X}
M\ge \fr{q^n(1-n^2e^{-\fr{\ee^2(\sqrt{n}-2)}{2}})}{Vol_q(n,d-1)-1}.
\eeq

\paragraph{Constant weight codes.}
For a vector $\vx\in\{0,1\}^n$, the {\it weight} $wt(\vx)$ is defined to be the number of its non-zero coordinates, namely, $wt(\vx)=|\{1\le i\le n:x_i\neq 0\}|$. A code is said to be a {\it constant weight code}, if all codewords of it have the same weight. An $(n,M,d)_2$-code with constant weight $w$ is denoted as an $(n,M,d;w)$-code, where we omitted the $q=2$ in the subscript.
For $\vx\in\{0,1\}^n,wt(\vx)=w$, and $0\le t\le 2w$, let $B(\vx,t;w):=\{\vy\in\{0,1\}^n:d(\vx,\vy)\le t, wt(\vy)=w\}$ denote the {\it constant weight Hamming ball} of radius $t$ centered at $\vx$.
Given $t$ and $w$, it is known that for every $\vx\in\{0,1\}^n$ with $wt(\vx)=w$, the volume of $B(\vx,t;w)$ is independent of $\vx$, which satisfies that
$$|B(\vx,t;w)|=\sum_{i=0}^{\left\lfloor t/2\right\rfloor}\bi{w}{i}\bi{n-w}{i}=:Vol(n,t;w).$$

Levenshtein \cite{Levenshtein-CWC} proved a GV-type bound for constant weight codes, showing that there exist $(n,M,d;w)$-codes with
\begin{align}\label{eqa:Le}
  M\ge\frac{\binom{n}{w}}{Vol(n,d-1;w)}.
\end{align}
Similarly to the improvement of  \eqref{eqa:VW} upon \eqref{eqa:GV}, Kim, Liu, and Tran \cite{Kim-Liu-Tran} improved \eqref{eqa:Le} by a linear factor of $n$, showing that for $\tau \le d/n\le (1-\ee)p(1-p)$, where $\tau, \ee, p$ are some absolute reals, there exist $(n,M,d;pn)$-codes with
\begin{align}\label{eqa:KHT}
  M=\Omega\left(\frac{n\binom{n}{pn}}{Vol(n,d-1;pn)}\right).
\end{align}

 \paragraph{Constant weight non-linear cyclic codes.} In the literature, there is also a GV-type lower bound for constant weight non-linear cyclic codes. More precisely, Chung, Salehi, and Wei \cite{Chung89,Chung89-92} and Yang and Fuja \cite{YF95} implicitly showed that there exist $(n,M,d;w)$-cyclic codes with
\beq\label{eqa: GV-cw-cyc}
M\ge \fr{\bi{n}{w}-f(n,w,d)}{n\cdot Vol(n,d-1;w)},
\eeq
where $f(n,w,d)$ is some function of $n,w,d$ (see \cite[Theorem 2]{Chung89} and \cite[Theorem 3]{YF95} for details).

Given the discussion above, it is natural to ask whether one can improve \eqref{eqa: HCC-X} and \eqref{eqa: GV-cw-cyc} by a linear or polynomial factor of $n$, similarly to the improvements of \eqref{eqa:VW} and \eqref{eqa:KHT} made upon \eqref{eqa:GV} and \eqref{eqa:Le}. This is one of the motivating questions of this paper. Note that the method which proves \eqref{eqa:VW} and \eqref{eqa:KHT} cannot be applied directly to improve \eqref{eqa: HCC-X} and \eqref{eqa: GV-cw-cyc}, as one has to take into account the property of cyclic codes. Based on some probabilistic and graph theoretic tools and some ideas from \cite{JV04,VW05,niu-xing-yuan-fhs,Kim-Liu-Tran}, we will answer the above question in a strong sense in Theorems \ref{thm:hc-code} and \ref{thm:bi-cw-hc-code} below.

The main concern of this paper is to present improved GV-type bounds for two special classes of non-linear cyclic codes, namely, hopping cyclic codes and its constant weight counterpart, optical orthogonal codes. As by-products, we also provide improved lower bounds for frequency hopping sequences sets and error-correcting weakly mutually uncorrelated codes as well.

Next, we will introduce these codes and our results in the detail.

\section{Main results}

\subsection{Hopping cyclic codes}

Hopping cyclic codes are (non-linear) cyclic codes with the additional property that the $n$ cyclic shifts of every given codeword are all distinct. Formally speaking, a code $C \subseteq \Z$ is called a {\it hopping cyclic code} (HCC for short) if for every $\vx\in C$, $C(\vx)\subseteq C$, and moreover $C(\vx)$ consists of $n$ distinct elements. We will call a hopping cyclic code $C\subseteq\Z$ an $(n,M,d)_q$-HCC, if it is itself an $(n,M,d)_q$-code. HCCs were originally designed to construct frequency hopping sequences sets (see \cite{Ding09,niu-xing-yuan-fhs} and \cref{subsec:FHS} below), but have found their own interest as an intriguing class of codes. The reader is referred to \cite{niu-xing-yuan-fhs} for more background on HCCs.

For any absolute constant $\ee>0$ and $d<(n-2\sqrt{n})(1-1/q-\ee)$, Niu, Xing, and Yuan (see Theorem III.5 in \cite{niu-xing-yuan-fhs}) showed that there exist $(n,M,d)_q$-HCCs with $M$ bounded below by \eqref{eqa: HCC-X}.

Our first main result improves the above GV-type bound for HCCs (and hence for non-linear cyclic codes) by asymptotically a linear factor of $n$.

\begin{thm}\label{thm:hc-code}
For positive integers $n,q,d$ and absolute reals $\tau,\epsilon\in (0,1-1/q)$ satisfying $\tau \le d/n\le 1-1/q-\ee$, there exist $(n,M,d)_q$-HCCs with
\begin{align*}
    M\ge \fr{cnq^n}{Vol_q(n,d-1)},
\end{align*}
where $c$ is a positive real depending only on $\tau,\ee,q$.
\end{thm}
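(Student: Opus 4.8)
The plan is to reduce the existence of a good HCC to finding a large independent set in an auxiliary ``conflict graph'' on cyclic equivalence classes, and then to beat the trivial (Gilbert--Varshamov) bound on its independence number by a factor of $\Theta(n)$ using that this graph is locally sparse; this follows the strategy behind \eqref{eqa:VW} and \eqref{eqa:KHT}, adapted to the cyclic setting. The first step is to pass to \emph{good} vectors: call $\vx \in \Z$ good if all of its $n$ cyclic shifts are pairwise at Hamming distance at least $(1 - 1/q - \ee/2)n$. Since $(1-1/q-\ee/2)n \ge d$, the shifts of a good $\vx$ are distinct, so $C(\vx)$ is an orbit of size exactly $n$ and already forms a code of minimum distance at least $d$. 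For uniform random $\vx$ and fixed $0 < j < n$, altering one coordinate of $\vx$ changes $d(\vx, \pi_j(\vx))$ by at most $2$ while $\mathbb{E}\,d(\vx,\pi_j(\vx)) = (1-1/q)n$, so McDiarmid's inequality gives $\Pr[d(\vx,\pi_j(\vx)) < (1-1/q-\ee/2)n] \le e^{-\Omega(\ee^2 n)}$; a union bound over $j$ shows a $(1 - o(1))$-fraction of vectors are good, so the number $N$ of distinct orbits of good vectors satisfies $N \ge (1-o(1))q^n/n$.

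Next I build the conflict graph $H$ on these $N$ orbits, joining $C(\vx)$ and $C(\vy)$ whenever $\min_{0 \le i < n} d(\vx, \pi_i(\vy)) \le d - 1$, equivalently (after fixing the representative $\vx$) whenever the orbit $C(\vy)$ meets the ball $B(\vx, d-1)$. An independent set $I$ in $H$ yields, via $C := \bigcup_{C(\vx) \in I} C(\vx)$, an $(n, n|I|, d)_q$-HCC: it is cyclic, each orbit contributes $n$ distinct codewords at pairwise distance $\ge d$ by goodness, and distinct orbits are mutually at distance $\ge d$ by independence. Goodness forces $C(\vx) \cap B(\vx,d-1) = \{\vx\}$, so the maximum degree of $H$ satisfies $\Delta \le |B(\vx,d-1)| - 1 < Vol_q(n,d-1) =: V$; moreover we may assume $\Delta \ge V/(Kn)$ for a suitable absolute $K$, since otherwise $\alpha(H) \ge N/(\Delta+1)$ already proves the theorem.

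The heart of the argument is that every vertex $C(\vx)$ of $H$ lies in at most $\Delta^2 q^{-\Omega(n)}$ triangles. Fixing the representative $\vx$, a triangle on $C(\vx), C(\vy), C(\boldsymbol{z})$ provides representatives $\vy, \boldsymbol{z} \in B(\vx,d-1)$ with $\boldsymbol{z} \in \bigcup_{k} B(\pi_k(\vy), d-1)$, so the triangle count is at most $\sum_{\vy \in B(\vx,d-1)} \sum_{k=0}^{n-1} |B(\vx,d-1) \cap B(\pi_k(\vy), d-1)|$. The $k = 0$ term counts triples of $B(\vx,d-1)$ that are pairwise within distance $d-1$, and equals $V^2 q^{-\Omega(n)}$ by the $q$-ary entropy estimates underlying \eqref{eqa:VW}; for $k \ne 0$, goodness of $\vx$ gives $d(\vx, \pi_k(\vy)) \ge (1-1/q-\ee/2)n - (d-1) \ge (\ee/2)n$, and two Hamming balls of radius $d - 1 < (1-1/q-\ee)n$ with centers $\Omega(n)$ apart intersect in only a $q^{-\Omega(n)}$-fraction of $V$, so each such term is $V q^{-\Omega(n)}$ and the whole $k \ne 0$ sum is $V^2 q^{-\Omega(n)}$. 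Hence each vertex lies in at most $\Delta^2/f$ triangles with $f = q^{\Omega(n)}$, and since $\Delta \ge V/(Kn) \ge q^{\Omega(n)}$, the Ajtai--Koml\'os--Szemer\'edi bound for locally sparse graphs gives $\alpha(H) \ge c_0 \frac{N \ln f}{\Delta} \ge \Omega(q^n/V)$ for an absolute $c_0 > 0$. Taking $C$ as above produces an HCC with $M = n|I| \ge \Omega(nq^n/V) = c\, n q^n / Vol_q(n,d-1)$ for some $c = c(\tau,\ee,q) > 0$.

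The main obstacle is the triangle estimate: one must verify, uniformly over the relevant range of distances between ball centers, that the intersection of two radius-$\Theta(n)$ Hamming balls is exponentially smaller than a single ball once the centers are a constant fraction of $n$ apart (this is exactly where the hypothesis $d/n \le 1 - 1/q - \ee$ enters), and one must combine this with the cyclic structure so that nontrivial cyclic shifts of any vector near $\vx$ land $\Omega(n)$-far from $\vx$. By comparison, the McDiarmid concentration of the first step and the locally sparse independence bound are applied essentially as black boxes.
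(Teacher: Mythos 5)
Your proposal is correct and follows essentially the same route as the paper: McDiarmid's inequality to show that almost all vectors have all cyclic shifts pairwise far apart, a conflict graph on the size-$n$ orbits whose independent sets are exactly the desired HCCs, the two Hamming-ball volume estimates to establish local sparsity with $K=e^{\Omega(n)}$, and a Shearer/AKS-type independence bound for locally sparse graphs. The only difference is bookkeeping in the sparsity step --- you count triangles at $C(\vx)$ split by the shift index $k=0$ versus $k\ne 0$ and use goodness of $\vx$ to push the shifted centers $\Omega(n)$ apart, whereas the paper splits the neighborhood into near and far orbits $S$ and $T$ --- but both reductions rest on the same two intersection-volume lemmas.
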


\subsection{Optical orthogonal codes}

Constant weight binary hopping cyclic codes are also known as optical orthogonal codes (OOCs for short). OOCs enable a large number of asynchronous users to transmit information efficiently and reliably. They have been widely used in various practical scenarios like code-division multiple-access systems and spread spectrum communication. A large number of existing papers were devoted to the constructions of OOCs, see for example \cite{Chung89,Chung13,HI17,LC16,PR13,YF95}.

We will denote an $(n,M,d)_2$-HCC with constant weight $w$ by an $(n,M,d;w)$-OOC, where we omitted the $q=2$ in the subscript. Chung, Salehi, and Wei \cite{Chung89,Chung89-92} and Yang and Fuja \cite{YF95} showed that there exist $(n,M,d;w)$-OOCs with $M$ bounded below by \eqref{eqa: GV-cw-cyc}.

Our second main result improves the above GV-type bound for OOCs (and hence for constant weight non-linear cyclic codes) by asymptotically a quadratic factor of $n$.

\begin{thm}\label{thm:bi-cw-hc-code}
For positive integers $n,q,d$ and absolute reals $p\in (0,1),~\ee\in(0,1/10),~\tau\in(0,p(1-p))$ satisfying $pn\in\mathbb{Z}, ~\tau \le d/n\le (1-\ee)p(1-p)$, there exist $(n,M,d;pn)$-OOCs with
\beq
M\ge \fr{cn\bi{n}{pn}}{Vol(n,d-1;pn)},\notag
\eeq
where $c$ is a positive real depending only on $p,\ee,\tau$.
\end{thm}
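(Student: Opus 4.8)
The plan is to adapt the probabilistic deletion method of Jiang--Vardy and Vu--Wu to the constant weight cyclic setting, combining it with the ``distinct shifts'' structure needed for OOCs and with the graph-theoretic bound on the independence number of locally sparse graphs. First I would set up the host graph $G$ whose vertex set is the set of all weight-$pn$ binary vectors of length $n$ (so $|V(G)|=\binom{n}{pn}$), placing an edge between $\vx$ and $\vy$ whenever $d(\vx,\vy)\le d-1$, i.e.\ whenever one lies in the constant weight Hamming ball of the other. An independent set in $G$ is exactly a constant weight code of minimum distance $\ge d$. The average degree of $G$ is $Vol(n,d-1;pn)-1$, so Tur\'an-type bounds recover Levenshtein's \eqref{eqa:Le}; the key is that if $G$ is locally sparse --- i.e.\ neighborhoods span few edges --- then $\alpha(G)$ beats the Tur\'an bound by a $\log(\text{avg degree})$, hence a $\Theta(n)$, factor. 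Counting codimension-$2$ configurations (pairs $\vy,\vz$ both within distance $d-1$ of a common $\vx$ and with $d(\vy,\vz)\le d-1$) shows that in the regime $\tau\le d/n\le (1-\ee)p(1-p)$ each neighborhood is indeed sparse enough, exactly as in Kim--Liu--Tran; this gives a linear-factor improvement over Levenshtein and is the content behind \eqref{eqa:KHT}.

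Next I would incorporate the cyclic/OOC requirements, which is where the \emph{extra} linear factor (for a quadratic total improvement over \eqref{eqa: GV-cw-cyc}) comes from. The point is that \eqref{eqa: GV-cw-cyc} loses a factor of $n$ precisely because one picks a cyclic code and counts orbits, dividing by $n$; but in the independence-number framework one can work directly with orbit representatives. I would pass to a quotient graph $\widetilde G$ whose vertices are the cyclic orbits $C(\vx)$ of weight-$pn$ vectors, restricting to orbits of full size $n$ (the ``aperiodic'' vectors; the periodic ones are a negligible fraction, of order $\binom{n/r}{pn/r}$ summed over proper divisors $r\mid n$, which is $o(\binom{n}{pn}/n)$ for $d$ linear in $n$). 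Put an edge between orbits $C(\vx)$ and $C(\vy)$ if some shift of $\vy$ lies within distance $d-1$ of $\vx$; an independent set in $\widetilde G$ then yields an $(n,M,d;pn)$-OOC of size $M=n\cdot\alpha(\widetilde G)$, because each chosen orbit contributes $n$ distinct codewords and the cyclic closure of the union is automatically the union itself with all pairwise distances $\ge d$. The vertex count is $\approx \binom{n}{pn}/n$ and the average degree is $\approx Vol(n,d-1;pn)$, so even the Tur\'an bound on $\widetilde G$ already recovers \eqref{eqa: GV-cw-cyc}; the gain is that $\widetilde G$ is \emph{also} locally sparse, for essentially the same codimension-$2$ counting reason (shifting only multiplies the relevant counts by $O(n)$ factors that cancel against the $n$ denominators), so the locally-sparse independence bound gives $\alpha(\widetilde G)=\Omega\!\big(\frac{\binom{n}{pn}/n\,\cdot\,\log Vol}{Vol}\big)=\Omega\!\big(\frac{\binom{n}{pn}}{Vol}\big)$, whence $M=n\cdot\alpha(\widetilde G)=\Omega\!\big(\frac{n\binom{n}{pn}}{Vol}\big)$ as claimed.

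The main obstacle, and the step I expect to require the most care, is controlling the local structure of the quotient graph $\widetilde G$ rigorously --- specifically, bounding the number of edges inside the neighborhood of a typical orbit. Unlike in the non-cyclic case, an ``edge'' between orbits hides a choice of shift, so the codimension-$2$ count must be done for triples $(\vx,\vy,\vz)$ up to the simultaneous cyclic action, and one must verify that fixing a representative does not inflate the count: concretely, for a fixed aperiodic $\vx$ one counts pairs $(\vy,\vz)$ with $d(\vx,\pi_i(\vy))\le d-1$, $d(\vx,\pi_j(\vz))\le d-1$, and $d(\pi_i(\vy),\pi_j(\vz))\le d-1$ for some $i,j$, and shows this is at most $O(n^2)\cdot(\text{the non-cyclic codimension-}2\text{ count})$, which after dividing by the $\Theta(n)$ neighborhood size and the $\Theta(n)$ vertex-count normalization leaves the neighborhood edge density small. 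A secondary technical point, handled via McDiarmid's inequality exactly as in the HCC argument for Theorem~\ref{thm:hc-code}, is to show that a random induced subgraph (obtained by sampling orbits) has its vertex count, edge count, and total neighborhood-edge count all concentrated around their expectations simultaneously, so that after a small amount of deletion one lands in a genuinely locally-sparse graph to which the independence-number theorem applies; the periodicity exclusion and the restriction $\ee<1/10$ on the distance ratio are exactly what keep all the error terms under control in this concentration step.
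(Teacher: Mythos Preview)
Your high-level framework --- quotient by cyclic shifts, work with a graph on orbits, verify local sparsity via the Kim--Liu--Tran intersection-volume bounds, then apply the Pirot--Hurley independence-number bound --- matches the paper's route. But there is a genuine gap in how you form the vertex set of $\widetilde G$.

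You restrict only to \emph{aperiodic} orbits (those of full size $n$) and then assert that an independent set in $\widetilde G$ yields an $(n,M,d;pn)$-OOC. This is not correct: aperiodicity gives $|C(\vx)|=n$, but it says nothing about the \emph{auto-correlation} distances $d(\vx,\pi_i(\vx))$. For the union $\bigcup C(\vx)$ to have minimum distance $\ge d$, you need $d(\vx)=\min_{1\le i\le n-1}d(\vx,\pi_i(\vx))\ge d$ for every chosen orbit, and this is a strictly stronger condition than aperiodicity (which is just $d(\vx)\ge 1$). Excluding periodic vectors removes an $o(\binom{n}{pn})$ set, but the set $\{\vx:d(\vx)<d\}$ could a priori be much larger, and your proposal never rules it out. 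As written, your ``OOC'' may contain a single orbit $C(\vx)$ with two shifts at distance $<d$.

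This is exactly the step where the paper invokes McDiarmid's inequality, and not in the way you describe. The paper does \emph{not} sample orbits randomly or concentrate graph statistics; rather, it takes the vertex set of $G_{OOC}$ to be the orbits contained in $B=\{\vx\in\binom{[n]}{pn}:d(\vx)>(1-\ee)np(1-p)\}$, and uses McDiarmid (applied to $f(\vx)=d(\vx,\pi_i(\vx))$ with i.i.d.\ Bernoulli$(p)$ coordinates, then conditioning on $wt(X)=pn$) to show $|B|\ge(1-o(1))\binom{n}{pn}$. Since $d\le(1-\ee)np(1-p)$, every $\vx\in B$ automatically satisfies $d(\vx)\ge d$, so the auto-correlation constraint holds by construction. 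Once you make this correction --- replace ``aperiodic'' by ``$d(\vx)>(1-\ee)np(1-p)$'' and redirect McDiarmid to proving $|B|$ is large --- the rest of your outline (degree bound, $S/T$ split for local sparsity, $n^2$ overcount from shift choices absorbed into the exponential) aligns with the paper's argument.
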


\subsection{Sketch of the proofs and comparison with related works}\label{subsec:sketch}

Since the work of Jiang and Vardy \cite{JV04}, it is now well-known that one can lower-bound the size of a code via the lower bound of the independence number of a locally sparse graph defined approximately. Loosely speaking, let $G$ be a graph with vertex set $[q]^n$, where two vertices (vectors) $\vx,\vy\in [q]^n$ are connected by an edge if and only if $d(\vx,\vy)\le d-1$. Then, an independent set $I$ of $G$ corresponds to an $(n,|I|,d)_q$-code. In fact, there is a one-to-one correspondence between independent sets of $G$ and codes in $\Z$ with minimum distance at least $d$. Therefore, given the minimum distance $d$, finding a large code in $\Z$ is equivalent to finding a large independent set in $G$.

Since for every $\vx\in\Z$, there are exactly $Vol_q(n,d-1)-1$ vectors $\vy\in\Z\setminus\{\vx\}$ such that $d(\vx,\vy)\le d-1$, every vertex of $G$ is connected to exactly $Vol_q(n,d-1)-1$ vertices in $\Z$. So, one can construct an independent set in $G$ of size $\frac{q^n}{Vol_q(n,d-1)}$, or equivalently, an $(n,\frac{q^n}{Vol_q(n,d-1)},d)_q$-code, by greedily picking vertices and throwing their neighbors. This gives the GV bound \eqref{eqa:GV}.

Jiang and Vardy \cite{JV04} showed that the graph $G$ defined above is in fact ``locally sparse'' (this will be clarified later in \cref{subsec:indep}). Therefore, there is a more economic way than the greedy algorithm to find a large independent set in $G$ (see \cref{lem: graph} below). This leads to their improvement on the GV bound of generic codes \eqref{eqa:GV} by a linear factor of $n$. Kim, Liu, and Tran \cite{Kim-Liu-Tran} used a similar idea in their proof of \eqref{eqa:KHT}, which improves the GV bound of constant weight codes \eqref{eqa:Le} by a linear factor of $n$.

The high level idea in the proofs of Theorems \ref{thm:hc-code} and \ref{thm:bi-cw-hc-code} will follow the above framework. However, as the codes considered in this paper need not only to have large minimum distance but also to be hopping cyclic, the corresponding graphs are not as neat as the graph $G$ described above.

The proofs of Theorems \ref{thm:hc-code} and \ref{thm:bi-cw-hc-code} can be summarized as follows:


\begin{enumerate}
    \item [Step 1.] define an appropriate graph $G_{HCC}$ (resp. $G_{OOC}$) whose independent sets correspond to HCCs (resp. OOCs);   
    \item [Step 2.] show that $G_{HCC}$ (resp. $G_{OOC}$) has sufficiently many vertices; 
    \item [Step 3.] show that $G_{HCC}$ (resp. $G_{OOC}$) is in fact locally sparse; 
    \item [Step 4.] use known lower bound on the independence number of locally sparse graphs to show that $G_{HCC}$ (resp. $G_{OOC}$) has sufficiently large independent sets, and hence there exist sufficiently large HCCs (resp. OOCs).
\end{enumerate}

Next, we would like to compare our work with some previous papers \cite{JV04,VW05,niu-xing-yuan-fhs,Kim-Liu-Tran}. On one hand, in \cite{JV04,VW05,Kim-Liu-Tran}, $G$ and its constant weight counterpart were used to prove improved GV-type bounds for generic codes and constant weight codes, respectively. $G_{HCC}$ (resp. $G_{OOC}$) defined in this paper is quite different from $G$ (resp. its constant weight counterpart). To be more precise, for $\vx\in\Z$ let
 \begin{align}\label{eqa:d(x)}
   d(\vx)=\min\{d_H(\pi_i(\vx),\pi_j(\vx)):0\le i<j\le n-1\}=\min\{d_H(\vx,\pi_i(\vx)):1\le i\le n-1\}
 \end{align}
 denote the minimum distance $d(C(\vx))$. By definition, $C\s\Z$ is an HCC only if for every $\vx\in C$, $d(\vx)\ge 1$ and moreover, $C$ is an $(n,M,d)_q$-HCC only if for every $\vx\in C$, $d(\vx)\ge d$. Therefore, compared with $G$ whose vertex set is $\Z$, in the vertex set of $G_{HCC}$ we have to rule out all vectors $\vx$ with $d(\vx)<d$. In fact, the vertex set of $G_{HCC}$ is not vectors in $\Z$ but some well-defined subsets of vectors in $\Z$. Hence, to execute Steps 1 and 2, we have to show that for many vectors $\vx\in\Z$, $d(\vx)$ is quite large. Similar discussion works for $G_{OOC}$. The details can be found in Sections \ref{sec qhc} and \ref{sec cw-hc} below.

 On the other hand, to show that $G_{HCC}$ and $G_{OOC}$ are locally sparse, we will use some well-developed tools that were continually improved in \cite{JV04,VW05,Kim-Liu-Tran}. In particular, we will use two upper bounds on the intersection volume of Hamming balls, which were proved in \cite{Kim-Liu-Tran} to show that $G$ and its constant weight counterpart are locally sparse.

 Lastly, let us compare our work with \cite{niu-xing-yuan-fhs}. Although \cite{niu-xing-yuan-fhs} is not graph theoretic, an important step in their proof of \eqref{eqa: HCC-X} was also to show that for $q^n-o(q^n)$ vectors $\vx\in\Z$, $d(\vx)$ is quite large (see \cite[Lemma III.4]{niu-xing-yuan-fhs}). To do so, they used some standard concentration inequality for martingales. In this paper, we used the McDiarmid's inequality instead, which makes our proof easier and gives a better control (upper bound) on the lower order term $o(q^n)$ (see \cref{rem:NXY} below).

A summary of previous mentioned GV-type bounds is presented in \cref{table}.

\begin{table*}[h]
 \renewcommand{\arraystretch}{2.5}
  \caption{A summary of GV-type bounds}\label{table}
  \centering
  \begin{tabular}{|c|c|c|}
    \hline
     & GV-type lower bounds of $M$ & Improved GV-type lower bounds of $M$ \\\hline
    Generic $(n,M,d)_q$-codes & $\fr{q^n}{Vol_q(n,d-1)}$ \eqref{eqa:GV} ( see \cite{GV-G,GV-V})  & $\Omega\left(\fr{nq^n}{Vol_q(n,d-1)}\right)$ \eqref{eqa:VW} (see \cite{JV04,VW05}) \\\hline
   Non-linear $(n,M,d)_q$-cyclic codes & $\fr{(1-o(1))q^n}{Vol_q(n,d-1)}$ \eqref{eqa: HCC-X} (see \cite{HI17,niu-xing-yuan-fhs}) & $\fr{cnq^n}{Vol_q(n,d-1)}$ \cref{thm:hc-code} (this paper) \\\hline
    Constant weight $(n,M,d;pn)$-codes & $\frac{\binom{n}{pn}}{Vol(n,d-1;pn)}$ \eqref{eqa:Le} ( see \cite{Levenshtein-CWC}) & $\Omega\left(\frac{n\binom{n}{pn}}{Vol(n,d-1;pn)}\right)$ \eqref{eqa:KHT} (see \cite{Kim-Liu-Tran}) \\\hline
    Constant weight non-linear $(n,M,d;pn)$-cyclic codes &
$\fr{\bi{n}{w}-f(n,d,w)}{n\cdot Vol(n,d-1;w)}$ \eqref{eqa: GV-cw-cyc} (see \cite{Chung89,Chung89-92,YF95})
 & $\fr{cn\bi{n}{pn}}{Vol(n,d-1;pn)}$ \cref{thm:bi-cw-hc-code} (this paper) \\
    \hline
  \end{tabular}
\end{table*}

The rest of this paper is organized as follows. In \cref{sec:appl} we will mention two applications of \cref{thm:hc-code}, which gives new lower bounds for frequency hopping sequences sets and error-correcting weakly mutually uncorrelated codes. In \cref{sec:pre} we will collect the tools that are used in Steps 1-4. In \cref{subsec:indep} we will formally define locally sparse graphs (this will be used in Step 1) and state a lower bound on its independence number (this will be used in Step 4, see \cref{lem: graph} below). In \cref{subsec:mc} we will introduce the McDiarmid's inequality (this will be used in Step 2, see \cref{thm: mc} below). In \cref{subsec:vol} we will state two upper bounds on the intersection volume of Hamming balls (this will be used in Step 3, see Lemmas \ref{lem:vol q} and \ref{lem:vol cw} below).
The proofs of Theorems \ref{thm:hc-code} and \ref{thm:bi-cw-hc-code} will be presented in Sections \ref{sec qhc} and \ref{sec cw-hc} respectively. We will conclude this paper in \cref{sec:con}.



\section{Applications of the main results}\label{sec:appl}

\subsection{Frequency hopping sequences}\label{subsec:FHS}

In this subsection, we will present an application of Theorem \ref{thm:hc-code} to frequency hopping sequences. We will need the following easy lemma.


\begin{lem}\label{lem:equivalent-class}
\begin{itemize}
   \item [{\rm (i)}] For every $\vx,\vy\in\Z$, $\vx\in C(\vy)$ if and only if $\vy\in C(\vx)$; moreover, $C(\vx)=C(\vy)$ if and only if $\vx\in C(\vy)$.
   \item [{\rm (ii)}] Let $C$ be an $(n,M,d)_q$-HCC. Then $n\mid M$, and $C$ can be partitioned into a pairwise disjoint union of $M/n$ $C(\vx)$'s, where $\vx\in C$.
\end{itemize}
\end{lem}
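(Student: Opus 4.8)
The plan is to observe that the cyclic shifts $\pi_0,\pi_1,\ldots,\pi_{n-1}$ form a cyclic group of order $n$ acting on $[q]^n$, under which $C(\vx)$ is exactly the orbit of $\vx$. Part (i) is then the elementary statement that orbits of a group action partition the ground set, and part (ii) follows by restricting this orbit structure to the hopping cyclic code $C$.

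For part (i), I would first record the composition law $\pi_i\circ\pi_j=\pi_{(i+j)\bmod n}$ together with $\pi_0=\mathrm{id}$; this shows $\{\pi_i\}_{i=0}^{n-1}\cong\mathbb{Z}/n\mathbb{Z}$, so in particular each $\pi_i$ is a bijection with inverse $\pi_{n-i}$. Now if $\vx\in C(\vy)$, write $\vx=\pi_i(\vy)$; applying $\pi_{n-i}$ gives $\vy=\pi_{n-i}(\vx)\in C(\vx)$, which proves the symmetry. For the ``moreover'' part, if $\vx=\pi_i(\vy)\in C(\vy)$ then for every $j$ we have $\pi_j(\vx)=\pi_{(i+j)\bmod n}(\vy)\in C(\vy)$, so $C(\vx)\subseteq C(\vy)$, and the reverse inclusion follows from the symmetry just established; conversely $C(\vx)=C(\vy)$ trivially gives $\vx=\pi_0(\vx)\in C(\vy)$.

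For part (ii), since $C$ is cyclic we have $C(\vx)\subseteq C$ for every $\vx\in C$, and since $C$ is hopping each such $C(\vx)$ has exactly $n$ distinct elements. Define a relation on $C$ by $\vx\sim\vy$ iff $C(\vx)=C(\vy)$; by part (i) this coincides with the relation $\vy\in C(\vx)$, which is reflexive (as $\vx=\pi_0(\vx)$), symmetric, and transitive (both from part (i)), hence an equivalence relation. The class of $\vx$ equals $\{\vy\in C:\vy\in C(\vx)\}=C(\vx)$, using $C(\vx)\subseteq C$; thus the equivalence classes are precisely the distinct sets among $\{C(\vx):\vx\in C\}$, each of size $n$, and they partition $C$. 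Selecting one representative from each class gives the decomposition $C=\bigsqcup C(\vx)$ into $M/n$ parts, and in particular $n\mid M$.

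I do not expect a genuine obstacle here; the only point requiring care is the distinction between $C(\vx)$ viewed as a multiset (always of size $n$ counting multiplicity) and as a genuine set. The hopping hypothesis is exactly what collapses the two in part (ii); without it the equivalence classes could be proper subsets of size dividing $n$, and the divisibility conclusion would fail.
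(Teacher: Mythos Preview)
Your proposal is correct and follows essentially the same approach as the paper: defining the equivalence relation $\vx\sim\vy \iff C(\vx)=C(\vy)$ on $C$, identifying each equivalence class with $C(\vx)$, and using the hopping hypothesis to conclude each class has size exactly $n$. The paper is considerably terser (it declares (i) ``straightforward from the definition'' without writing out the group-action details you supply), but the underlying argument is the same.
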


\begin{proof}
Note that (i) follows straightforwardly from the definition of $C(\vx)$. To prove (ii), it is not hard to check that the relation $\sim$ on the set $C$ defined by $\vx\sim\vy$ if $C(\vx)=C(\vy)$ is an equivalence relation. By (i) and the definition of an HCC, each equivalent class equals to $C(\vx)$ for some $\vx\in C$, and hence consists of $n$ distinct vectors. Therefore, (ii) follows from the fact that the equivalent classes given by $\sim$ form a partition of $C$.
\end{proof}

Let $C$ be an $(n,M,d)_q$-HCC. Given the lemma above, consider the partition of $C$ formed by the equivalent classes given by $\sim$. If we choose a representative element from each of these equivalent classes, then these representatives form a set of {\it frequency hopping sequences} (FHSs for short).
An FHS set obtained by an $(n,M,d)_q$-HCC in the above manner is denoted as an $(n,M/n,n-d)_q$-FHS, which is a set $F\s\Z$ with $|F|=M/n$ such that
\begin{itemize}
  \item for every $\vx\in F$ and $0< i\le n-1$, $H_{\vx,\vx}(i):=n-d_H(\vx,\pi_i(\vx))\le n-d$;
  \item for every distinct $\vx,\vy\in F$ and $0\le i\le n-1$, $H_{\vx,\vy}(i):=n-d_H(\vx,\pi_i(\vy))\le n-d$;
\end{itemize}
where $H_{\vx,\vx}(i)$ and $H_{\vx,\vy}(i)$ are known as the {\it auto- and cross-Hamming correlation functions} at time delay $i$ (see \cite{Ding09,Ding10,niu-xing-yuan-fhs}). FHSs were designed for the transmission of radio signals and have been studied extensively. 
In fact, the work of Niu et al. \cite{niu-xing-yuan-fhs} was motivated by the study of FHSs.

Note that one can also define FHSs directly using the Hamming correlation functions. For the sake of saving space, we will not formally state such a definition. It was observed in \cite{Ding09,niu-xing-yuan-fhs}
that the two definitions of FHSs are equivalent, since there is a one-to-one correspondence between HCCs and FHSs. Based on such a correspondence and \eqref{eqa: HCC-X}, Niu et al. (see \cite[Corollary III.6]{niu-xing-yuan-fhs}) showed that for any absolute constant $\ee>0$ and $\lambda\ge n-(n-2\sqrt{n})(1-1/q-\ee)$ there exist $(n,M,\lambda)_q$-FHSs with
\beq\label{eqa: FHS}
M\ge \fr{q^n(1-n^2e^{-\fr{\ee^2(\sqrt{n}-2)}{2}})}{n(Vol_q(n,n-\lambda-1)-1)}.
\eeq

The following result improves \eqref{eqa: FHS} by a linear factor of $n$.

\begin{prop}\label{prop:fhs}
For positive integers $n,q,\lambda$ and absolute reals $\tau,\epsilon\in (0,1-1/q)$ satisfying $1/q+\ee \le \lambda/n\le 1-\tau$, there exist $(n,M,\lambda)_q$-FHSs with
\begin{align*}
    M\ge \fr{cq^n}{Vol_q(n,n-\lambda-1)},
\end{align*}
where $c$ is a positive real depending only on $\tau,\ee,q$.
\end{prop}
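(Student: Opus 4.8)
The plan is to obtain \cref{prop:fhs} as a direct consequence of \cref{thm:hc-code} together with the one-to-one correspondence between HCCs and FHSs described above. The whole proof is essentially a reparametrization.

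First I would set $d := n - \lambda$. Since $\lambda$ is a positive integer with $\lambda/n \le 1-\tau < 1$, we have $1 \le \lambda \le n-1$, so $d$ is a positive integer in $\{1,\dots,n-1\}$ and \cref{thm:hc-code} is applicable to it. Substituting $\lambda = n-d$ into the hypothesis $1/q + \ee \le \lambda/n \le 1-\tau$ turns it into $\tau \le d/n \le 1 - 1/q - \ee$, which is precisely the range demanded by \cref{thm:hc-code} (and $\tau,\ee \in (0,1-1/q)$ is assumed throughout). Hence \cref{thm:hc-code} yields a constant $c = c(\tau,\ee,q) > 0$ and an $(n,M',d)_q$-HCC $C$ with
\[
  M' \ge \fr{cnq^n}{Vol_q(n,d-1)}.
\]

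Next, by \cref{lem:equivalent-class}(ii), $n \mid M'$ and $C$ splits into $M'/n$ pairwise disjoint classes $C(\vx)$; choosing one representative from each class produces, by the discussion preceding the proposition, an $(n,M'/n,n-d)_q$-FHS, that is, an $(n,M,\lambda)_q$-FHS with
\[
  M = \fr{M'}{n} \ge \fr{cq^n}{Vol_q(n,d-1)} = \fr{cq^n}{Vol_q(n,n-\lambda-1)},
\]
where $c$ still depends only on $\tau,\ee,q$, as required. I do not expect any genuine obstacle here: the only points needing care are the elementary translation of the inequality constraints under the substitution $d = n - \lambda$ and the verification that $d$ is a legitimate positive integer in the stated regime. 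In short, \cref{prop:fhs} is a corollary of \cref{thm:hc-code}, deduced in the same way that \eqref{eqa: FHS} was obtained from \eqref{eqa: HCC-X} in \cite{niu-xing-yuan-fhs}.
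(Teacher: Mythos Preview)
Your argument is correct and matches the paper's own proof, which simply remarks that the proposition is an easy consequence of \cref{lem:equivalent-class}(ii), \cref{thm:hc-code}, and the definition of FHSs. Your explicit reparametrization $d=n-\lambda$ and verification that the constraint $1/q+\ee\le\lambda/n\le 1-\tau$ becomes $\tau\le d/n\le 1-1/q-\ee$ is exactly the intended deduction.
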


\begin{proof}
  This is an easy consequence of \cref{lem:equivalent-class} (ii), \cref{thm:hc-code},  and the definition of FHSs.
\end{proof}

\subsection{Error-correcting weakly mutually uncorrelated codes}

For $1\le\kappa\le n$, a code $C\in\Z$ is said to be a $\kappa$-{\it weakly mutually uncorrelated code} (WMUC for short) if for all $\kappa\le\ell\le n-1$, no proper {\it prefix} of length $\ell$ of a codeword in $C$ appears as a {\it suffix} of another codeword in $C$, including itself. Formally speaking, $C$ is a $\kappa$-WMUC if for every $\vx,\vy\in C$ (possibly identical) and every $\kappa\le\ell\le n-1$, $(x_1,\ldots,x_\ell)\neq(y_{n-\ell+1},\ldots,y_n)$. A WMUC is said to be {\it error-correcting} if it has some sufficiently large minimum distance. 

WMUCs and 
error-correcting WMUCs were introduced by Yazdi, Kiah, Gabrys, and Milenkovic \cite{Yazdi-WMU} as a technique that allows random access to encoded DNA strands in DNA-based storage systems. The reader is referred to \cite{Yazdi-WMU} for more background. 

The next result shows that HCCs have the bonus of being also WMUC.

\begin{prop}
For positive integers $n,q,\kappa$ and absolute reals $\tau,\epsilon\in (0,1-1/q)$ satisfying $1/q+\ee \le \kappa/n\le 1-\tau$ there exist $\kappa$-WMU $(n,M,n-\kappa+1)_q$-codes with
  \begin{align*}
    M\ge \fr{cq^n}{Vol_q(n,n-\kappa)},
\end{align*}
where $c$ is a positive real depending only on $\tau,\ee,q$.
\end{prop}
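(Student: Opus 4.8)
The plan is to reduce this last proposition to \cref{prop:fhs} (equivalently, directly to \cref{thm:hc-code}) by showing that every hopping cyclic code whose codewords have large auto-correlation separation, after choosing one representative per cyclic equivalence class, is automatically a $\kappa$-WMUC. First I would recall that by \cref{lem:equivalent-class}(ii) an $(n,M,d)_q$-HCC $C$ partitions into $M/n$ cyclic classes $C(\vx)$; let $F$ be a transversal, so $|F|=M/n$. The key observation is the following: if $\vx,\vy\in F$ (possibly equal) and some prefix of $\vx$ of length $\ell$ with $\kappa\le\ell\le n-1$ equals a suffix of $\vy$ of length $\ell$, then the vector obtained by concatenating the length-$(n-\ell)$ suffix of $\vx$ after this common block — more precisely, one checks that $\pi_{\ell}(\vy)$ and $\vx$ agree on the first $\ell$ coordinates, so $d_H(\vx,\pi_\ell(\vy))\le n-\ell$.

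Now split into two cases. If $\vx\neq\vy$ lie in different cyclic classes, then $\pi_\ell(\vy)\in C$ and $\vx\in C$ are distinct codewords of the HCC, so $d_H(\vx,\pi_\ell(\vy))\ge d$; combined with the inequality above this forces $n-\ell\ge d$, i.e. $\ell\le n-d$. If instead $\vx$ and $\vy$ lie in the same class, or $\vx=\vy$, then $\pi_\ell(\vy)$ is a cyclic shift of $\vx$ (with $\ell\neq 0$ since $\ell\ge\kappa\ge 1$), so by \eqref{eqa:d(x)} and the HCC property $d_H(\vx,\pi_\ell(\vy))=d_H(\vx,\pi_j(\vx))\ge d(\vx)\ge d$, and again $\ell\le n-d$. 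So in all cases a prefix/suffix collision of length $\ell$ can occur only for $\ell\le n-d$. Therefore, taking $d=n-\kappa+1$, we get that no collision occurs for any $\ell\ge\kappa=n-d+1$, which is exactly the defining property of a $\kappa$-WMUC. Hence $F$ is a $\kappa$-WMU $(n,|F|,d)_q$-code with $d=n-\kappa+1$.

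It remains to lower-bound $|F|=M/n$. Here I would invoke \cref{thm:hc-code} with this particular $d=n-\kappa+1$: the hypothesis $1/q+\ee\le\kappa/n\le 1-\tau$ translates to $\tau\le d/n\le 1-1/q-\ee+1/n$, which (absorbing the $1/n$ term, or stating the theorem with a slightly adjusted $\ee$) gives an $(n,M,d)_q$-HCC with $M\ge cnq^n/Vol_q(n,d-1)=cnq^n/Vol_q(n,n-\kappa)$. Dividing by $n$ yields $|F|\ge cq^n/Vol_q(n,n-\kappa)$, as claimed, with $c$ depending only on $\tau,\ee,q$. (Equivalently one can cite \cref{prop:fhs} directly, since an $(n,M/n,n-d)_q$-FHS is exactly such a transversal $F$.)

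The only genuinely non-routine point is the case analysis establishing the prefix/suffix $\Rightarrow$ correlation implication, and in particular handling the self-overlap case $\vx=\vy$ and the same-class case correctly — this is where the hopping property (all $n$ shifts distinct, so $d(\vx)\ge d$) is essential, and where a naive argument using only the minimum distance of $C$ would fail. Everything else (the partition into cyclic classes, the arithmetic relating $\kappa$ and $d$, the application of \cref{thm:hc-code}) is bookkeeping. I do not anticipate any obstacle beyond being careful with the modular index arithmetic in $\pi_\ell$ and with the boundary values $\ell=\kappa$ and $\ell=n-1$.
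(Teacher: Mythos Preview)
Your approach is essentially the same as the paper's: take the HCC from \cref{thm:hc-code} with $d=n-\kappa+1$, pick one representative per cyclic class, and argue that a forbidden prefix/suffix overlap of length $\ell\ge\kappa$ would force two codewords of the HCC within distance $n-\ell<d$, yielding a contradiction. One cosmetic slip: under the paper's convention $\pi_i(\vy)=(y_{i+1},\ldots,y_{i+n})$, the shift that places the suffix $(y_{n-\ell+1},\ldots,y_n)$ at the front is $\pi_{n-\ell}(\vy)$, not $\pi_\ell(\vy)$; with that correction (and noting that the paper handles both your cases at once by observing that $d_H(\vx,\pi_{n-\ell}(\vy))<d$ forces $\vx=\pi_{n-\ell}(\vy)$, hence $\vx,\vy$ are distinct elements of the same class, contradicting the transversal), your argument is complete and matches the paper.
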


\begin{proof}
  By \cref{thm:hc-code}, under the assumption of the proposition, there exists an $(n,M',n-\kappa+1)_q$-HCC $C'\s\Z$ with $M'\ge \fr{cnq^n}{Vol_q(n,n-\kappa)},$ where $c$ is a positive real depending only on $\tau,\ee,q$. By \cref{lem:equivalent-class} (ii), $C'$ can be partitioned into a pairwise disjoint union of $M'/n$ $C(\vx)$'s. Just as the proof of \cref{prop:fhs}, let $C$ be a subcode of $C'$ formed by picking exactly one codeword from every $C(\vx)$ in the partition. Then, $C$ is an $(n,M,n-\kappa+1)_q$-code with $M\ge \fr{cq^n}{Vol_q(n,n-\kappa)}$.

  It remains to show that $C$ is also $\kappa$-WMU. Assume to the contrary that there exist $\vx,\vy\in C$ (possibly identical) such that for some $\kappa\le\ell\le n-1$, the prefix $(x_1,\ldots,x_{\ell})$ is identical to the suffix $(y_{n-\ell+1},\ldots,y_n)$. Note that $\{\vx,\vy\}\s C\s C'$. As $C$ is an HCC, $\pi_{n-\ell}(y)=(y_{n-\ell+1},\ldots,y_n,y_1,\ldots,y_{n-\ell})\in C'$. As $d_H(x,\pi_{n-\ell}(y))\le n-\ell\le n-\kappa$, which is strictly less than the minimum distance of $C'$, we must have $x=\pi_{n-\ell}(y)$. It implies that $\{\vx,\vy\}\s C(\vx)$ and $|C\cap C(\vx)|\ge 2$, a contradiction.
\end{proof}

\section{Collection of some useful lemmas}\label{sec:pre}

In this section, we will collect the tools that are used in this paper.



\subsection{Locally sparse graphs and their independent sets}\label{subsec:indep}

In this subsection, we will introduce locally sparse graphs and a lower bound on their independence number. A graph $G=(V,E)$ consists of a vertex set $V$ and an edge set $E$, where $V$ is a finite set and $E$ is a family of $2$-subsets of $V$. Two vertices $u,v\in V$ are said to be connected by an edge if $\{u,v\}\in E$. Two connected vertices are called {\it neighbors}. The set of neighbors of a vertex $u$ in $G$ is denoted by $N_G(u)$. The {\it degree} of a vertex $u\in V$ in $G$, denoted by $\deg_G(u)$, is the size of its neighborhood, i.e., $\deg_G(u)=|N_G(u)|$. Note that we will omit the subscript $G$ in $N_G(u)$ and $\deg_G(u)$ when the underlying graph $G$ is clear from the context.
The {\it maximum degree} of a graph is the maximum degree among all of its vertices.
For a subset $S\subseteq V$, the set of edges {\it induced} by $S$ consists of all edges with both endpoints in $S$, namely the edges $\{\{u,v\}\in E:u,v\in S\}$. A graph is called {\it locally sparse} if for every vertex, its neighborhood induces only a limit number of edges.

Given a graph $G=(V,E)$, a subset $I\subseteq V$ is called an {\it independent set} of $G$ if every two vertices in $I$ are not connected. The {\it independence number} of $G$, denoted by $\alpha(G)$, is the size of the maximum independent set of $G$. Using the greedy algorithm mentioned in \cref{subsec:sketch}, it is not hard to see that every graph with maximum degree $D$ has an independent set of size at least $\frac{|V|}{D+1}$. \cref{lem: graph} below, which is a direct consequence of \cite[Corollary 1]{Pirot-Hurley}, presents a better lower bound for the independence number of locally sparse graphs. 

%




\begin{lem}\cite[Corollary 1]{Pirot-Hurley}\label{lem: graph} 
Let $G$ be a graph on $n$ vertices with maximum degree at most $D$, where $D\to\infty$ as $n\to\infty$. Suppose that for every vertex of $G$, its neighborhood induces at most $\fr{D^2}{K}$ edges, where $1\le K \le D^2+1$. Then
$$\alpha(G)\ge(1-o(1))\cdot\fr{|V|}{D}\cdot\ln(\min\{D,K\}),$$
where $o(1)\to 0$ as $K\to\infty$.
\end{lem}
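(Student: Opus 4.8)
The statement to prove is \cref{lem: graph}, which the excerpt explicitly attributes to \cite[Corollary 1]{Pirot-Hurley}. Accordingly, the plan is \emph{not} to reprove the Pirot--Hurley independence-number bound from scratch (that would require the full machinery of their local-occupancy / hard-core model argument), but rather to \emph{derive} the stated form as a clean consequence of their corollary. The first step is therefore to recall the precise statement of \cite[Corollary 1]{Pirot-Hurley}: it gives, for a graph $G$ of maximum degree at most $D$ in which every vertex neighborhood spans at most $\frac{D^2}{K}$ edges (equivalently, the local edge density is at most $1/K$ in an appropriate normalization), a lower bound on $\alpha(G)$ of the shape $(1-o(1))\frac{|V|}{D}\ln\!\big(\min\{D,K\}\big)$, with the $o(1)$ term vanishing as $\min\{D,K\}\to\infty$. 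The second step is to reconcile any cosmetic discrepancies between their formulation and ours: typically Pirot--Hurley phrase the hypothesis in terms of the maximum number of edges induced in a neighborhood, or in terms of a ``local sparsity'' parameter; I would check that ``at most $\frac{D^2}{K}$ induced edges'' matches their hypothesis after possibly rescaling $K$ by an absolute constant, and that absorbing such a constant only affects the $o(1)$ and not the leading-order $\frac{|V|}{D}\ln(\min\{D,K\})$.

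The third step is to handle the regime bookkeeping: the condition $1\le K\le D^2+1$ is exactly the range in which ``at most $\frac{D^2}{K}$ edges'' is a non-vacuous constraint (since a neighborhood of size $\le D$ spans at most $\binom{D}{2}<\frac{D^2}{1}$ edges, so $K\ge 1$ is automatic, and $\frac{D^2}{K}\ge 0$ forces $K$ finite), so no additional case analysis is needed there. I would also note that the hypothesis $D\to\infty$ as $n\to\infty$ guarantees the leading term $\frac{|V|}{D}\ln(\min\{D,K\})$ is the honest asymptotic order, and that when $K$ stays bounded the bound degrades gracefully to $(1-o(1))\frac{|V|}{D}\ln(\min\{D,K\})$ with a now-$O(1)$ logarithm — still a constant-factor improvement over the greedy bound $\frac{|V|}{D+1}$ when $K\ge 3$, which is all that is claimed. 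Putting these together, the lemma follows by a direct citation plus a one-line verification of hypotheses.

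The only genuine ``obstacle'' here is bibliographic rather than mathematical: making sure the hypotheses of \cite[Corollary 1]{Pirot-Hurley} are met by the graphs $G_{HCC}$ and $G_{OOC}$ that this lemma will later be applied to — but that verification belongs to Sections~\ref{sec qhc} and~\ref{sec cw-hc}, not to the proof of the lemma itself. For the lemma as stated, the proof is simply: \emph{apply \cite[Corollary 1]{Pirot-Hurley} with the given $D$ and $K$, observing that the hypothesis on induced edges in neighborhoods is precisely the local-sparsity condition required there, and that $D\to\infty$ ensures the error term is $o(1)$}. I would write this as a two- or three-sentence proof, with the understanding that the substantive content of the lemma is inherited wholesale from \cite{Pirot-Hurley}.
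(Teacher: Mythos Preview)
Your proposal is correct and matches the paper's treatment: the paper gives no proof whatsoever for \cref{lem: graph}, treating it as a direct quotation of \cite[Corollary 1]{Pirot-Hurley}. Your plan to simply cite and verify hypotheses is exactly right; if anything, you are being more thorough than the paper, which does not even include the two- or three-sentence reconciliation you describe.
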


\subsection{McDiarmid's inequality}\label{subsec:mc}

We will make use of the powerful McDiarmid's inequality, which, roughly speaking, states that a Lipschitz function of random variables is concentrated around its expectation.

\begin{lem}{\cite{mc89}}\label{thm: mc}
(McDiarmid's inequality). Given sets $\mathcal{X}_1,\ldots,\mathcal{X}_n$, let $X_1, \ldots, X_n$ be independent random variables taking values in  $\mathcal{X}_1,\ldots,\mathcal{X}_n$ respectively. Let $f:\mathcal{X}_1 \times  \cdots \times \mathcal{X}_n \to \mathbb{R}$ be a mapping. If there exist constants $c_1,\ldots, c_n\in\mathbb{R}$ such that for each $1\le i\le n$,
\begin{align*}
    \sup \limits_{x_i'\in \mathcal{X}_i, x_1\in \mathcal{X}_1, \ldots, x_n\in \mathcal{X}_n}|f(x_1,\ldots,x_i,\ldots,x_n)-f(x_1,\ldots,x_{i}',\ldots,x_n)| \le c_i,
\end{align*}
(i.e., changing the value of the $i$-th coordinate $x_i$ change the absolute value of $f$ by at most $c_i$), then for any $t >0$,
\begin{align*}
    \Pr[f(X_1,X_2, \ldots,X_n)-\E[f(X_1,X_2, \ldots, X_n)] \le -t] \le \exp \left(- \frac{2 t^2}{\sum_{i=1}^n c_i^2} \right).
\end{align*}
\end{lem}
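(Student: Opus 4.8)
The plan is to prove McDiarmid's inequality by the classical martingale route: encode $f(X_1,\ldots,X_n)$ as a Doob (exposure) martingale, control its increments using the bounded differences hypothesis, and then apply the exponential moment method together with Markov's inequality. First I would introduce the filtration obtained by revealing the coordinates one at a time. Set $Z_0=\E[f(X_1,\ldots,X_n)]$ and, for $1\le i\le n$,
$$Z_i=\E[f(X_1,\ldots,X_n)\mid X_1,\ldots,X_i],$$
so that $Z_n=f(X_1,\ldots,X_n)$ and $(Z_i)_{i=0}^{n}$ is a martingale with respect to the sequence $(X_i)$. The target quantity $Z_n-Z_0$ is then the telescoping sum $\sum_{i=1}^{n}D_i$ of the increments $D_i=Z_i-Z_{i-1}$, each satisfying $\E[D_i\mid X_1,\ldots,X_{i-1}]=0$ by the tower property.

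The key step, and the one I expect to be the main obstacle, is to show that conditioned on $X_1,\ldots,X_{i-1}$ the increment $D_i$ is confined to an interval of width at most $c_i$. To do this I would fix $x_1,\ldots,x_{i-1}$ and study, as a function of a free coordinate $y$, the conditional expectation $g(y):=\E[f\mid X_1=x_1,\ldots,X_{i-1}=x_{i-1},X_i=y]$. Because $X_{i+1},\ldots,X_n$ are independent of $X_i$, replacing $y$ by $y'$ changes the integrand $f$ pointwise by at most $c_i$ (the Lipschitz hypothesis applied in coordinate $i$), and averaging over $X_{i+1},\ldots,X_n$ preserves this bound, so $\sup_{y}g(y)-\inf_{y}g(y)\le c_i$. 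Since $D_i=g(X_i)-\E[g(X_i)\mid X_1,\ldots,X_{i-1}]$, this places $D_i$ in a random, $\mathcal{F}_{i-1}$-measurable interval of length at most $c_i$. The delicate point is precisely this passage from a pointwise statement about $f$ to a statement about conditional expectations; it is where the independence of the coordinates and the coordinatewise Lipschitz condition must be combined carefully.

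With the conditional range bound established, the remainder is routine. I would invoke Hoeffding's lemma (provable by a short convexity argument): a random variable with conditional mean zero taking values in an interval of length $c_i$ has conditional moment generating function at most $\exp(s^2 c_i^2/8)$. Iterating this estimate through the tower property, conditioning successively on $X_1,\ldots,X_{i-1}$, yields
$$\E\bigl[\exp(-s(Z_n-Z_0))\bigr]\le\exp\Bigl(\tfrac{s^2}{8}\textstyle\sum_{i=1}^{n}c_i^2\Bigr)$$
for every $s>0$. Finally, applying Markov's inequality to the lower tail gives
$$\Pr[Z_n-Z_0\le -t]=\Pr\bigl[e^{-s(Z_n-Z_0)}\ge e^{st}\bigr]\le\exp\Bigl(-st+\tfrac{s^2}{8}\textstyle\sum_{i=1}^{n}c_i^2\Bigr),$$
and optimizing over $s>0$ by taking $s=4t/\sum_{i=1}^{n}c_i^2$ produces the claimed bound $\exp\bigl(-2t^2/\sum_{i=1}^{n}c_i^2\bigr)$, which completes the proof. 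Note that only the lower tail is needed here, so a single one-sided application of Markov's inequality suffices and no absolute values or union bound over the two tails are required.
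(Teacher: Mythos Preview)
Your proof is correct and follows the standard Doob-martingale route to McDiarmid's inequality. Note, however, that the paper does not supply its own proof of this lemma: it is quoted as a known result from \cite{mc89} and used as a black-box tool. So there is nothing to compare against in the paper itself; your argument is simply the classical proof that the cited reference contains.
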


\subsection{The intersection volume of two Hamming balls}\label{subsec:vol}

We will need the following estimation on the intersection volume of two Hamming balls.

\begin{lem}{\cite[Lemma 4.2]{Kim-Liu-Tran}}\label{lem:vol q}
For positive integers $n,q,d$ and two absolute reals $\tau,\epsilon\in (0,1-1/q)$ satisfying $\tau \le d/n\le 1-1/q-\ee$, we have that
\begin{align*}
    \fr{Vol_q(n,d-\tau n/2)}{Vol_q(n,d-1)}\le e^{-\Omega_{\tau,q}(1)\cdot n}.
\end{align*}
Moreover, for every $\vx,\ \vy \in \Z$,
\begin{align*}
    \fr{|B(\vx,d-1)\cap B(\vy,d-1)|}{Vol_q(n,d-1)}=e^{-\Omega_{\tau,q}(1)\cdot d(\vx,\vy)}.
\end{align*}
\end{lem}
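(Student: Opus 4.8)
The plan is to reduce both statements to standard large-deviation estimates for sums of i.i.d.\ indicator variables, exploiting the fact that for fixed $\vx$ the ball $B(\vx,t)$ has volume $Vol_q(n,t)=\sum_{i\le t}\binom{n}{i}(q-1)^i$, which is (up to a polynomial factor) $q^n\cdot\Pr[\mathrm{Bin}(n,1-1/q)\le t]$ when $t<(1-1/q)n$. First I would prove the first inequality: since $\tau\le d/n\le 1-1/q-\ee$, the radii $d-1$ and $d-\tau n/2$ both lie in the interval $[\tfrac{\tau}{2}n,(1-1/q-\ee)n]$, which is bounded away from the mean $(1-1/q)n$ of the binomial. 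Writing the ratio as a ratio of binomial tail probabilities, I would use the fact that in this regime consecutive binomial coefficients grow by a multiplicative factor bounded below by $1+\delta$ for some $\delta=\delta(\tau,\ee,q)>0$; hence dropping from radius $d-1$ to radius $d-\tau n/2$ discards a geometric-like tail and the ratio is at most $(1+\delta)^{-\tau n/2+O(1)}=e^{-\Omega_{\tau,q}(1)\cdot n}$. Equivalently one can invoke a Chernoff bound: $Vol_q(n,d-\tau n/2)\le q^n\exp(-c n)$ while $Vol_q(n,d-1)\ge \binom{n}{d-1}(q-1)^{d-1}\ge q^n\exp(-c' n)$ with $c>c'$ when $d/n$ is bounded away from $1-1/q$, so the ratio is $e^{-\Omega(1)n}$.

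For the second statement I would first reduce to the case where $\vx,\vy$ differ in exactly $\rho:=d(\vx,\vy)$ coordinates; by symmetry the intersection volume $|B(\vx,d-1)\cap B(\vy,d-1)|$ depends only on $\rho$ and $n$. A vector $\vz$ lies in the intersection iff, splitting the coordinates into the $\rho$ ``disagreement'' positions and the $n-\rho$ ``agreement'' positions, the number of coordinates in which $\vz$ differs from $\vx$ plus those in which it differs from $\vy$ is controlled: on an agreement coordinate $\vz$ contributes equally to $d(\vz,\vx)$ and $d(\vz,\vy)$, while on a disagreement coordinate it contributes to exactly one of them if it equals $x_i$ or $y_i$ and to both otherwise. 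The key point is that the two constraints $d(\vz,\vx)\le d-1$ and $d(\vz,\vy)\le d-1$ cannot both be slack: summing them forces $2\,d(\vz,\text{agreement part})+(\text{stuff on the }\rho\text{ bad coords})\le 2(d-1)$, and on the $\rho$ bad coordinates the ``stuff'' is at least $\rho$ (each bad coordinate costs at least one disagreement with $\vx$ or with $\vy$). So on the agreement coordinates $\vz$ must lie within Hamming radius $\le d-1-\rho/2$ of the common value, which shrinks the effective radius by $\rho/2$. Thus
\begin{align*}
  |B(\vx,d-1)\cap B(\vy,d-1)|\le q^{\rho}\cdot Vol_q\bigl(n-\rho,\,d-1-\rho/2\bigr)\le q^{\rho}\cdot Vol_q\bigl(n,\,d-\rho/2\bigr)\cdot q^{-\rho},
\end{align*}
where the last step re-expands the smaller-length ball inside the length-$n$ ball absorbing the free $\rho$ coordinates (one must track the $(q-1)$ versus $q$ factors, but these only cost a polynomial). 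Combining with the first part applied with $\tau n$ replaced by $\rho$ — i.e.\ the ratio $Vol_q(n,d-\rho/2)/Vol_q(n,d-1)$ decays geometrically in $\rho$ by the same binomial-coefficient-growth argument as long as $d/n$ is bounded away from $1-1/q$ — yields
\begin{align*}
  \frac{|B(\vx,d-1)\cap B(\vy,d-1)|}{Vol_q(n,d-1)}\le e^{-\Omega_{\tau,q}(1)\cdot \rho}=e^{-\Omega_{\tau,q}(1)\cdot d(\vx,\vy)}.
\end{align*}
For the matching lower bound one exhibits explicitly enough $\vz$'s: leave the agreement coordinates free within a ball of radius $d-1-\rho$ (say) of the common value and restrict the bad coordinates appropriately; this gives $Vol_q(n,d-1)\cdot e^{-O(\rho)}$ from below, so the ratio is $e^{-\Theta(d(\vx,\vy))}$ as claimed.

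The main obstacle I anticipate is bookkeeping rather than conceptual: one has to be careful about the interplay between the $(q-1)$-factors (coming from the $q-1$ possible ``wrong'' symbols per coordinate) and the way the disagreement coordinates of $\vx$ and $\vy$ are partitioned according to whether $z_i$ equals $x_i$, equals $y_i$, or equals neither, and to verify that the resulting combinatorial sum is genuinely dominated by the reduced-radius ball up to a polynomial-in-$n$ factor (which is harmless once exponential decay is established). The cleanest route is probably to phrase everything probabilistically — pick $\vz$ uniformly at random and bound the probability that it lands in $B(\vx,d-1)\cap B(\vy,d-1)$ via the union of the two Chernoff bounds together with the observation above that the events are ``anti-correlated'' on the bad coordinates — which sidesteps most of the explicit combinatorics. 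Either way the decay rate $\Omega_{\tau,q}(1)$ is uniform over all $\vx,\vy$ because it depends only on the gap between $d/n$ and $1-1/q$, which is $\ge\ee$ by hypothesis.
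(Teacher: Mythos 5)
First, note that the paper does not prove this lemma at all: it is imported verbatim as \cite[Lemma 4.2]{Kim-Liu-Tran}, so there is no in-paper argument to compare against and your proposal has to be judged on its own merits. Your treatment of the first inequality is fine: for $i\le (1-1/q-\ee)n$ the summands $\bi{n}{i}(q-1)^i$ of $Vol_q(n,\cdot)$ grow by a factor at least $1+\delta(\ee,q)$ per step, so shrinking the radius by $\tau n/2$ costs $(1+\delta)^{-\tau n/2}$; equivalently the binomial lower-tail/Chernoff route works.

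The second part, however, has a genuine gap, and it is not the ``bookkeeping'' you set aside. Two problems. (a) The displayed inequality $Vol_q(n-\rho,d-1-\rho/2)\le q^{-\rho}\,Vol_q(n,d-\rho/2)$ is false: writing $Vol_q(m,t)=q^m\Pr[\mathrm{Bin}(m,1-1/q)\le t]$ and using that $\mathrm{Bin}(n,1-1/q)$ stochastically dominates $\mathrm{Bin}(n-\rho,1-1/q)$, one gets the \emph{reverse} inequality $q^{-\rho}Vol_q(n,t)\le Vol_q(n-\rho,t)$, and the one extra unit of radius on the right-hand side only buys a constant factor, not $q^{\rho}$. (b) More fundamentally, the strategy of leaving the $\rho$ disagreement coordinates completely free cannot work for $q\ge 3$ once $d/n>1/q$. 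Concretely, take $q=3$ and $d-1=n/2$ (admissible since $1/2<2/3-\ee$): then
\begin{align*}
q^{\rho}\,Vol_q(n-\rho,\,d-1-\rho/2)\;=\;3^{n}\Pr\left[\mathrm{Bin}(n-\rho,2/3)\le \tfrac{n-\rho}{2}\right]\;\approx\;3^{n}e^{-(n-\rho)D(1/2\|2/3)},
\end{align*}
while $Vol_q(n,d-1)\approx 3^{n}e^{-nD(1/2\|2/3)}$ with $D(1/2\|2/3)=\tfrac12\ln\tfrac98>0$; the ratio is $e^{+\Omega(1)\rho}$, so your upper bound on the intersection is \emph{larger} than the trivial bound $Vol_q(n,d-1)$ and yields nothing. (The sign of $\tfrac12\ln\fr{(1-1/q)(1/q)}{x(1-x)}$ with $x=(d-1)/n$ governs this, and it is negative whenever $1/q<x<1-1/q$.) The exponential decay in $d(\vx,\vy)$ for $q\ge 3$ comes precisely from the constraints you discard on the disagreement coordinates: one must split them according to whether $z_i=x_i$, $z_i=y_i$, or neither, with counts $k_x,k_y,k_0$ and weight $\bi{\rho}{k_x,\,k_y,\,k_0}(q-2)^{k_0}$ subject to $a+k_y+k_0\le d-1$ and $a+k_x+k_0\le d-1$, and show that this constrained multinomial sum is exponentially smaller than $q^{\rho}$ times the unconstrained count. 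That analysis is the heart of \cite[Lemma 4.2]{Kim-Liu-Tran}, not a routine verification. Your lower-bound sketch (for the ``$=$'' direction) is also too vague to assess, but only the upper bound is used in the paper.
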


Similar upper bounds can also be proved for the intersection volume of two constant weight Hamming balls. 

\begin{lem}{\cite[Lemma 4.3]{Kim-Liu-Tran}}\label{lem:vol cw}
Given positive integers $n,d$, and reals $p\in(0,1),~\ee\in(0,1/10),~\tau\in (0,p(1-p))$ satisfying $\tau \le d/n\le (1-\ee)p(1-p)$. Then
\begin{align*}
     \fr{Vol(n,d-\tau n/2;pn)}{Vol(n,d-1;pn)}\le 2e^{-\Omega_{\ee,\tau}(1)n}.
\end{align*}
Moreover, for every $\vx,\ \vy \in \{0,1\}^n$ with Hamming weight $pn$,
\begin{align*}
    \fr{|B(\vx,d-1;pn)\cap B(\vy,d-1;pn)|}{Vol(n,d-1;pn)}=2e^{-\Omega_{\ee}(1)\cdot (d(\vx,\vy)+d(\vx,\vy)^2 / (d-1))}.
\end{align*}
\end{lem}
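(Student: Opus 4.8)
The plan is to treat the two displayed inequalities separately, since they have somewhat different flavors, and to rely throughout on the elementary fact that the summands
\[
a_i:=\binom{w}{i}\binom{n-w}{i},\qquad w:=pn,
\]
of $Vol(n,t;pn)=\sum_{i\le\lfloor t/2\rfloor}a_i$ are \emph{increasing} in $i$ on the whole range that matters. Indeed $a_{i+1}/a_i=\frac{(w-i)(n-w-i)}{(i+1)^2}$, and one checks, using $w=pn$ and $d/n\le(1-\ee)p(1-p)$, that for every $i\le r:=\lfloor(d-1)/2\rfloor$ this ratio is bounded below by an absolute constant $\rho>1$: the peak of the sequence sits near $i\approx p(1-p)n$, while $r$ lies a definite factor (at most $(1-\ee)/2<\tfrac12$) below it. This monotonicity is the workhorse for both parts.

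For the first inequality, write $s:=\lfloor(d-\tau n/2)/2\rfloor$, so that $r-s\ge \tau n/4-1$. Since the $a_i$ grow geometrically with ratio $\ge\rho>1$, I would bound the numerator by its last term up to a polynomial factor, $Vol(n,d-\tau n/2;pn)\le (s+1)a_s\le n\,a_s$, and the denominator from below by its last term, $Vol(n,d-1;pn)\ge a_r$. The ratio is then at most $n\cdot a_s/a_r\le n\,\rho^{-(r-s)}\le n\,\rho^{-(\tau n/4-1)}$, which is $2e^{-\Omega_{\ee,\tau}(1)n}$ once the polynomial prefactor is absorbed. This settles the first claim.

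For the second inequality I would argue by direct counting. Partition the $n$ coordinates into four classes according to the pattern of $(\vx,\vy)$: class $A$ (both $1$), $B$ ($\vx=1,\vy=0$), $C$ ($\vx=0,\vy=1$), $D$ (both $0$), of sizes $w-m,\ m,\ m,\ n-w-m$, where $m:=d(\vx,\vy)/2$. Writing $a,b,c,e$ for the number of $1$'s that a weight-$w$ candidate $\boldsymbol{z}$ places in each class, a short computation gives $d(\vx,\boldsymbol{z})=2(w-a-b)$ and $d(\vy,\boldsymbol{z})=2(w-a-c)$, so $\boldsymbol{z}$ lies in the intersection exactly when $a+b\ge w-r$ and $a+c\ge w-r$ (with $a+b+c+e=w$). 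Hence
\[
|B(\vx,d-1;pn)\cap B(\vy,d-1;pn)|=\sum_{\substack{a+b+c+e=w\\ a+b\ge w-r,\ a+c\ge w-r}}\binom{w-m}{a}\binom{m}{b}\binom{m}{c}\binom{n-w-m}{e},
\]
while $Vol(n,d-1;pn)=\sum_{i\le r}a_i$ is, by the monotonicity above, within a factor $n$ of its top term $\binom{w}{r}\binom{n-w}{r}$. I would sum over classes $A$ and $D$ by Vandermonde's identity, reducing the numerator to a sum over $(b,c)$, and then compare its dominant term with $\binom{w}{r}\binom{n-w}{r}$. The two linear constraints force $\boldsymbol{z}$ to ``spend'' its allotted distance preferentially on the $2m$ coordinates of classes $B,C$, and this is what produces the exponential decay. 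The linear term $\delta$ reflects the limited supply (only $m$ each) of such economical flips, whereas the quadratic term $\delta^2/(d-1)$ emerges from the entropic cost of the remaining flips, quantified by ratios of binomials such as $\binom{m}{\cdot}/\binom{w}{\cdot}$ estimated via Stirling's approximation.

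The main obstacle is exactly this last term-by-term estimate. Its difficulty is twofold: one must locate the maximizing $(a,b,c,e)$ uniformly across two genuinely different regimes — $\delta$ much smaller than $d$ (where the linear term dominates) and $\delta$ comparable to $d$ (where the constraints become binding, even forcing $b,c\ge m-r$, and the quadratic term dominates) — and one must show that the combined exponent matches $\Omega_{\ee}(1)\cdot(\delta+\delta^2/(d-1))$ while every polynomial-in-$n$ correction is swallowed by the constant $2$. Checking the boundary case $\delta=0$ (both sides equal $1\le 2$) and keeping the two competing exponential contributions correctly balanced is where the real care is needed.
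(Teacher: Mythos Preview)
This lemma is not proved in the present paper; it is quoted directly from \cite[Lemma~4.3]{Kim-Liu-Tran} and used as a black box (in the Appendix, to verify the local-sparsity claim for $G_{OOC}$). There is therefore no in-paper argument to compare your attempt against.

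On the content of your sketch: the first inequality is essentially complete. The ratio $a_{i+1}/a_i=\frac{(w-i)(n-w-i)}{(i+1)^2}$ is indeed bounded below by an absolute $\rho>1$ for all $i\le r$ once $d\le(1-\ee)p(1-p)n$, because $r\le\tfrac12(1-\ee)p(1-p)n$ sits well below the peak of the sequence at $i\approx p(1-p)n$; the geometric-decay bound $n\rho^{-(r-s)}$ then gives the claim after absorbing the polynomial factor. For the second inequality your four-class decomposition according to the pattern of $(\vx,\vy)$ and the resulting constrained sum over $(a,b,c,e)$ are exactly the right starting point and coincide with the setup in \cite{Kim-Liu-Tran}. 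But, as you yourself flag, the step that actually delivers the exponent $\Omega_{\ee}(1)\cdot\bigl(\delta+\delta^{2}/(d-1)\bigr)$ uniformly across the small-$\delta$ and large-$\delta$ regimes is not carried out; that estimate is the real content of the cited lemma and requires several pages of careful Stirling/entropy computations in the original. So the second half of your proposal remains a plan rather than a proof.
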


\section{Proof of \cref{thm:hc-code}}\label{sec qhc}

Given two reals $\tau,\ee\in(0,1-1/q)$, the goal of \cref{thm:hc-code} is to show the existence of large $(n,M,d)_q$-HCCs whenever $\tau \le d/n\le 1-1/q-\ee$. The proof of \cref{thm:hc-code} will follow the steps outlined in \cref{subsec:sketch}. First of all, note that an $(n,M,d)_q$-code $C$ is an $(n,M,d)_q$-HCC only if for every $\vx\in C$, $d(\vx)\ge d$ (recall \eqref{eqa:d(x)}). Therefore, to construct large HCCs, we need to show the existence of a large set $A\subseteq\Z$ such that for each $\vx\in A$, $d(\vx)\ge d$. Such a set $A$ in fact exists, as shown by the next lemma.

\begin{lem}\label{lem: dle}
Let $n,q$ be positive integers and  $\epsilon\in (0,1-1/q)$ be a real. Let $$A=\{\vx\in\Z:d(\vx)>n(1-1/q-\ee)\}.$$ Then $$|A|\ge q^n\left(1-(n-1)\exp(-\frac{\ee^2 n}{2})\right).$$
\end{lem}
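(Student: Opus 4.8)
The plan is to fix an arbitrary codeword $\vx\in\Z$, show that for each individual shift $1\le i\le n-1$ the ``bad event'' $d_H(\vx,\pi_i(\vx))\le n(1-1/q-\ee)$ has probability at most $\exp(-\ee^2 n/2)$ when $\vx$ is drawn uniformly from $\Z$, and then take a union bound over the $n-1$ values of $i$. Since $A$ is precisely the set of $\vx$ that avoid all $n-1$ bad events, this gives $|A|/q^n = \Pr[\vx\in A]\ge 1-(n-1)\exp(-\ee^2 n/2)$, which is the claim.

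The main work is the single-shift estimate. Fix $1\le i\le n-1$ and sample $\vx=(x_1,\dots,x_n)$ with i.i.d.\ uniform coordinates in $[q]$. Consider $f(\vx)=d_H(\vx,\pi_i(\vx))=|\{1\le j\le n: x_j\neq x_{j+i}\}|$. First I would compute $\E[f]$: for each fixed $j$, $\Pr[x_j\neq x_{j+i}]=1-1/q$ (note $j\neq j+i \bmod n$ since $1\le i\le n-1$), so by linearity $\E[f]=n(1-1/q)$. Then I would apply McDiarmid's inequality (\cref{thm: mc}) to $f$ viewed as a function of the independent variables $x_1,\dots,x_n$ with the bounded-difference constants $c_j$: changing one coordinate $x_j$ affects only the two indicator terms indexed by $j$ and by $j-i$, so $f$ changes by at most $2$, i.e.\ $c_j=2$ for all $j$. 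McDiarmid then yields $\Pr[f-\E f\le -t]\le\exp(-2t^2/(4n))=\exp(-t^2/(2n))$. Taking $t=\ee n$ gives $\Pr[d_H(\vx,\pi_i(\vx))\le n(1-1/q)-\ee n]\le\exp(-\ee^2 n/2)$, which is exactly the single-shift bound needed.

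Finally, I would assemble the pieces: by the union bound,
\begin{align*}
\Pr[\vx\notin A]=\Pr\Big[\exists\, 1\le i\le n-1:\ d_H(\vx,\pi_i(\vx))\le n(1-1/q-\ee)\Big]\le (n-1)\exp\!\Big(-\tfrac{\ee^2 n}{2}\Big),
\end{align*}
and since $\vx$ is uniform over $\Z$, $|A|=q^n\Pr[\vx\in A]\ge q^n\big(1-(n-1)\exp(-\ee^2 n/2)\big)$, recalling from \eqref{eqa:d(x)} that $d(\vx)=\min_{1\le i\le n-1}d_H(\vx,\pi_i(\vx))$ so the event $\vx\in A$ coincides with all $n-1$ shift-distances exceeding $n(1-1/q-\ee)$.

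I do not expect a serious obstacle here; the only points requiring a little care are (i) correctly identifying the bounded-difference constant as $2$ rather than $1$ (each coordinate appears in two of the comparison terms because of the cyclic structure), and (ii) making sure the index arithmetic mod $n$ never forces $j=j+i$, which is guaranteed by $1\le i\le n-1$. One could alternatively invoke the martingale/Azuma approach used in \cite{niu-xing-yuan-fhs}, but McDiarmid's inequality applied coordinatewise is cleaner and, as the authors note in \cref{rem:NXY}, gives a sharper lower-order term.
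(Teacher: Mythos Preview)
Your proposal is correct and matches the paper's proof essentially step for step: uniform sampling of $\vx$, computation of $\E[d_H(\vx,\pi_i(\vx))]=n(1-1/q)$, McDiarmid with Lipschitz constants $c_j=2$ (each coordinate appears in exactly two comparison terms), the choice $t=\ee n$, and a union bound over the $n-1$ shifts. The care points you flag (the constant $2$ versus $1$, and $j\neq j+i\bmod n$) are exactly the ones the paper handles.
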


\begin{proof} Let $X=(X_1,\ldots,X_n)$ be a uniformly chosen random element of $\Z$.  To prove the lemma, it is enough to show that
\begin{align}\label{eqa:d(X)}
\Pr[d(X) \le n(1-1/q-\ee)] \le (n-1)\exp(-\frac{\ee^2 n}{2}).
\end{align}
Recall that $d(X)=\min\{d(X,\pi_i(X)):1\le i\le n-1\}$, where $\pi_i(X)=(X_{i+1},\ldots,X_{i+n})$. By the union bound,
\begin{align*}
    \Pr[d(X) \le n(1-1/q-\ee)]&=\Pr[\exists~1\le i\le n-1~\text{s.t.}~d(X,\pi_i(X))\le n(1-1/q-\ee)]\\
    &\le\sum_{i=1}^{n-1} \Pr[d(X,\pi_i(X))\le n(1-1/q-\ee)].
\end{align*}

According to the discussion above, to prove \eqref{eqa:d(X)} it suffices to show that for every fixed $1\le i\le n-1$,
$$\Pr[d(X, \pi_i(X)) \le n(1-1/q-\ee)]\le\exp(-\frac{\ee^2 n}{2}).$$ Note that choosing $X=(X_1,\ldots,X_n) \in \Z$ uniformly at random is equivalent to choosing $X_1,\ldots,X_n\in [q]$ uniformly and independently at random. Let $$\delta:[q]\times[q]\rightarrow\{0,1\}$$ denote the Kronecker function such that for $a,b\in[q]$, $\delta(a,b)=1$ if $a=b$ and $\delta(a,b)=0$ if $a\neq b$. It is straightforward to check by definition that for every $\vx\in[q]^n$,
\begin{align}\label{eqa:d}
    d(\vx,\pi_i(\vx))=n-\sum_{j=1}^n \delta(x_j,x_{j+i}).
\end{align}

We will apply \cref{thm: mc} with $f(x_1,\ldots,x_n):=d(\vx,\pi_i(\vx))$. Observe that for every $1\le j\le n$, changing the value of $x_j$ could only change the values of $\delta(x_{j-i},x_j)$ and $\delta(x_j,x_{j+i})$ in the right hand side of \eqref{eqa:d}, which in turn changes the absolute value of $d(\vx,\pi_i(\vx))$ and hence $f$ by at most $2$. In other words, for every $1\le j\le n$,
\begin{align}\label{eqa:lipchitz}
    \sup_{x'_j,x_1,\ldots,x_n\in[q]} |f(x_1,\ldots,x_j,\ldots,x_n)-f(x_1,\ldots,x'_j,\ldots,x_n)|\le 2.
\end{align}
Moreover, by the linearity of expectation, it is easy to see that for every $1\le i\le n-1$ and $1\le j\le n$, $\E(\delta(X_j,X_{j+i}))=1/q$, which implies that $\E[f(X_1,\ldots,X_n)=n(1-1/q)$. Applying \cref{thm: mc} to $f$, we obtain
\begin{align*}
    \Pr\left[d(X,\pi_i(X))-n(1-1/q) \le -\ee n\right]\le \exp(-\fr{\ee^2n}{2}).
\end{align*}
Plugging the above inequality to the aforementioned union bound proves \eqref{eqa:d(X)}, and hence \cref{lem: dle}.
\end{proof}

\begin{rem}\label{rem:NXY}
  Using a standard concentration inequality for martingales, \cite[Lemma III.4]{niu-xing-yuan-fhs} proved a slightly weaker result
  \begin{align*}
    |\{\vx\in\Z:d(\vx)>(n-2\sqrt{n})(1-1/q-\ee)\}|\ge q^n\left(1-n^2e^{-\fr{\ee^2(\sqrt{n}-2)}{2}}\right).
  \end{align*}
\end{rem}

\begin{rem}\label{rem:large-V}
 It is not hard to check by definition that $A$ is an HCC with minimum distance, say 1. Therefore, similarly to the proof of \cref{lem:equivalent-class} (ii), one can show that with the equivalence relation $\sim$, $A$ can be partitioned into a family $V$ of equivalent classes such that every equivalent class consists of exactly $n$ distinct vectors of $\Z$.
 It follows by \cref{lem: dle} that
\begin{align}\label{eqa:|V|}
    |V|=\frac{|A|}{n}\ge \frac{1}{n}\cdot q^n\left(1-(n-1)\exp(-\fr{\ee^2n}{2})\right).
\end{align}
Moreover, by the definition of $A$ we have that for every $C(\vx)\in V$,
\begin{align}\label{eqa:dx}
    d(\vx)>n(1-1/q-\ee),
\end{align}
and for every distinct $C(\vx),C(\vy)\in V$,
\begin{align}\label{eqa:disjoint}
    C(\vx)\cap C(\vy)=\emptyset.
\end{align}
\end{rem}

To prove \cref{thm:hc-code}, we will construct an auxiliary graph on the vertex set $V$, and then apply \cref{lem: graph} to this graph, as detailed below.

\begin{lem}\label{lem:implyThm1}
  With the notation of \cref{thm:hc-code}, fixing some $d\in[\tau n, n(1-1/q-\ee)]$, let $G_{HCC}=(V,E)$ be a graph constructed as below, where $V$ is the family of equivalent classes that partition $A$, and two distinct vertices (or equivalent classes) $C(\vx),C(\vy)\in V$ form an edge in $E$ if and only if
  $$d(C(\vx),C(\vy))=\min\{d(\vx_i,\vy_j):0\le i,j\le n-1\}\le d-1.$$ Then the following holds:
  \begin{itemize}
      \item [{\rm (i)}] for every independent set $I\subseteq V$, the set of vectors $ C:=\bigcup_{C(\vx)\in I} C(\vx)$ forms an $(n,n|I|,d)_q$-HCC;
      \item [{\rm (ii)}] the maximum degree of $G_{HCC}$ is at most $D:=Vol_q(n,d-1)$;
      \item [{\rm (iii)}] for every vertex of $G_{HCC}$, its neighborhood induces at most $\frac{D^2}{K}$ edges, where $K=e^{\Omega_{\tau,q}(1)\cdot n}$.
  \end{itemize}
\end{lem}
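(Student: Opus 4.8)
The plan is to verify the three items in order; the first two are quick and (iii) carries the real content. For (i), I would invoke that the classes in $V$ are pairwise disjoint (\eqref{eqa:disjoint}) and each consists of exactly $n$ vectors (\cref{rem:large-V}), which at once gives $|C|=\sum_{C(\vx)\in I}|C(\vx)|=n|I|$ and exhibits $C$ as a cyclic code all of whose shift-orbits have the full size $n$, i.e. an HCC. For $d(C)\ge d$, take distinct $\vu,\vv\in C$: if they lie in a common class $C(\vx)$, then $d(\vu,\vv)\ge d(\vx)>n(1-1/q-\ee)\ge d$ by \eqref{eqa:dx} and $d\le n(1-1/q-\ee)$; if $\vu\in C(\vx)$ and $\vv\in C(\vy)$ with $C(\vx)\neq C(\vy)$ both in the independent set $I$, then $\{C(\vx),C(\vy)\}\notin E$, whence $d(\vu,\vv)\ge d(C(\vx),C(\vy))\ge d$. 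For (ii), shift-invariance of the Hamming metric gives $d(C(\vx),C(\vy))=\min\{d(\vx,\pi_k(\vy)):0\le k\le n-1\}$, so a class $C(\vy)$ is a neighbour of $C(\vx)$ iff $C(\vy)\cap B(\vx,d-1)\neq\emptyset$; choosing, for each neighbour, one representative $r(C(\vy))\in C(\vy)\cap B(\vx,d-1)$ yields (by disjointness of classes) an injection of $N(C(\vx))$ into $B(\vx,d-1)\setminus\{\vx\}$, hence $\deg(C(\vx))\le Vol_q(n,d-1)-1<D$.

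For (iii), fix a vertex $u=C(\vx)$ and keep the representatives $\vy^*:=r(C(\vy))\in C(\vy)\cap B(\vx,d-1)$ from (ii). The obstruction compared with \cite{JV04,Kim-Liu-Tran} is that a neighbour class may meet $B(\vx,d-1)$ far from $\vx$, so I would split $N(u)=N_1\sqcup N_2$, where $N_1$ consists of those neighbours whose shift-orbit $C(\vy)$ meets the ball $B(\vx,\tau n/2)$ and $N_2=N(u)\setminus N_1$. Picking one close shift per class of $N_1$ injects $N_1$ into $B(\vx,\tau n/2)$, so using the monotonicity of $Vol_q(n,\cdot)$, the inequality $\tau n/2\le d-\tau n/2$ (valid since $d\ge\tau n$), and the first estimate of \cref{lem:vol q}, one gets $|N_1|\le Vol_q(n,\tau n/2)\le Vol_q(n,d-\tau n/2)\le D\,e^{-\Omega_{\tau,q}(1)\cdot n}$.

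I would then count the edges induced by $N(u)$ according to the location of their endpoints. Edges with at least one endpoint in $N_1$ number at most $\binom{|N_1|}{2}+|N_1|\,|N_2|\le|N_1|\cdot|N(u)|\le D\,e^{-\Omega(n)}\cdot D=D^2e^{-\Omega(n)}$. For an edge $\{C(\vy),C(\vz)\}$ with both endpoints in $N_2$, write $\vz^*:=r(C(\vz))\in C(\vz)\cap B(\vx,d-1)$; adjacency of $C(\vy)$ and $C(\vz)$ means $d(\vy^*,\pi_m(\vz^*))\le d-1$ for some $m$, i.e. $\vz^*\in B(\pi_{-m}(\vy^*),d-1)$, so the number of $C(\vz)\in N_2$ adjacent to a fixed $C(\vy)$ is at most $\bigl|B(\vx,d-1)\cap\bigcup_{k=0}^{n-1}B(\pi_k(\vy^*),d-1)\bigr|\le\sum_{k=0}^{n-1}|B(\vx,d-1)\cap B(\pi_k(\vy^*),d-1)|$. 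Since $C(\vy)\in N_2$, every shift $\pi_k(\vy^*)$ satisfies $d(\vx,\pi_k(\vy^*))>\tau n/2$, and the second estimate of \cref{lem:vol q} bounds each summand by $D\,e^{-\Omega_{\tau,q}(1)\cdot n}$, hence the sum by $nD\,e^{-\Omega_{\tau,q}(1)\cdot n}=D\,e^{-\Omega_{\tau,q}(1)\cdot n}$. Summing over the at most $D$ classes in $N_2$ and halving bounds the $N_2$–$N_2$ edges by $D^2e^{-\Omega(n)}$ too. Adding the three contributions, $N(u)$ induces at most $D^2e^{-\Omega_{\tau,q}(1)\cdot n}=:D^2/K$ edges with $K=e^{\Omega_{\tau,q}(1)\cdot n}$, which is (iii). (Together with $D=Vol_q(n,d-1)\to\infty$ and the bound on $|V|$ from \eqref{eqa:|V|}, items (i)–(iii) then feed \cref{lem: graph} to finish \cref{thm:hc-code}.)

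The step I expect to be the main obstacle is (iii): in contrast with the generic and constant-weight settings, the neighbourhood of a vertex of $G_{HCC}$ is not a Hamming ball, and two neighbours can be adjacent through a cyclic shift, so the intersection-volume bounds of \cref{lem:vol q} do not apply to them directly. The $N_1/N_2$ dichotomy — isolating the comparatively few neighbour classes whose orbit comes within Hamming distance $\tau n/2$ of $\vx$ — is what restores enough uniformity over the relevant shifts for \cref{lem:vol q} to be usable, and keeping both the cardinality of $N_1$ and the per-vertex $N_2$-degree exponentially below $D$ is the crux of the estimate.
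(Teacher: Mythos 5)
Your proposal is correct and follows essentially the same route as the paper: items (i) and (ii) are proved identically, and for (iii) the paper uses the same close/far dichotomy of the neighbourhood (with threshold $d-\tau n/2$ rather than your $\tau n/2$, which is equivalent up to constants since $d\ge\tau n$), bounding the close part by the first estimate of \cref{lem:vol q} and the per-vertex degree of the far part by summing the second estimate over the $n^2$ pairs of shifts $|B(\pi_i(\vx),d-1)\cap B(\pi_j(\vy),d-1)|$, exactly as you do with your representative-based single sum over $n$ shifts.
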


\begin{proof}
    To prove (i), note first that for every $C(\vx)\in V$, $C(\vx)$ consists of $n$ distinct vectors, which implies that $C$ is indeed an HCC. Moreover, it follows by \eqref{eqa:disjoint} that  $|C|=n|I|$. Lastly, to prove that $d(C)\ge d$, let $\boldsymbol{c}^1$, $\boldsymbol{c}^2\in C$ be two distinct codewords. On one hand, if $\{\boldsymbol{c}^1,\boldsymbol{c}^2\}\subseteq C(\vx)$ for some $C(\vx)\in I\subseteq V$, then by \eqref{eqa:dx} we have
\begin{align*}
 d(\boldsymbol{c}^1,\boldsymbol{c}^2)\ge d(\vx)>n(1-1/q-\ee)\ge d.
\end{align*}
On the other hand, if $\boldsymbol{c}^1\in C(\vx)$ and $\boldsymbol{c}^2\in C(\vy)$ for distinct $C(\vx),C(\vy)\in I$, then as $I$ is an independence set in $G$, we have
\begin{align*}
    d(\boldsymbol{c}^1,\boldsymbol{c}^2)\ge d(C(\vx),C(\vy))\ge d,
\end{align*}
completing the proof of (i).

To prove (ii), note that for every $C(\vx)\in V$,
\begin{align*}
    \deg(C(\vx))&=|\{C(\vy)\in V\setminus\{C(\vx)\}: d(C(\vx),C(\vy))\le d-1\}|\\
    &=|\{C(\vy)\in V\setminus\{C(\vx)\}: \min_{0\le i\le n-1}d(\vx,\pi_i(\vy))\le d-1\}|\\
    &=|\{C(\vy)\in V\setminus\{C(\vx)\}: \exists~i\in[n]~\text{s.t.}~\pi_i(\vy)\in B(\vx,d-1)\}|\\&\le |B(\vx,d-1)|=D,
\end{align*}
as needed.

To prove (iii), fixing a vertex $C(\vx) \in V$, let $\Gamma$ denote the subgraph induced by the neighborhood of $C(\vx)$. Partition $V(\Gamma)=S\cup T$, where $$S=\{C(\vy)\in V(\Gamma):d(C(\vx),C(\vy))\le d-\tau n/2\}$$ and $$T=\{C(\vy)\in V(\Gamma):d-\tau n/2< d(C(\vx),C(\vy))\le d-1\}.$$ We have the following claim.
\begin{claim}\label{claim:sparse-graph-1}
  There exists some $K=e^{\Omega_{\tau,q}(1)\cdot n}$ such that $|S|\le D/K$, and for every vertex $C(\vy)\in T$, $\deg_{\Gamma}(C(\vy))\le D/K$.
\end{claim}
Given the correctness of the claim, it follows that
\begin{align*}
  |E(\Gamma)|&=\fr{1}{2}\left(\sum_{C(\vy)\in S}\deg_{\Gamma}(C(\vy))+\sum_{C(\vy)\in T}\deg_{\Gamma}(C(\vy))\right)\\
  &\le\fr{1}{2}\left(|S|\cdot D+|T|\cdot \fr{D}{K}\right)\le\fr{D^2}{K},
\end{align*}
completing the proof of  (iii).
\end{proof}

It remains to prove \cref{claim:sparse-graph-1}.

\begin{proof}[Proof of \cref{claim:sparse-graph-1}]
To prove the first inequality, observe that
\begin{align*}
    |S|&=|\{C(\vy)\in V(\Gamma):d(C(\vx),C(\vy))\le d-\tau n/2\}|\\
    &=|\{C(\vy)\in V(\Gamma):\min_{0\le i\le n-1}d(\vx,\pi_i(\vy))\le d-\tau n/2\}|\\
    &=|\{C(\vy)\in V(\Gamma):\exists~i\in[n]~\text{s.t.}~\pi_i(\vy)\in B(\vx,d-\tau n/2)\}|\\
    &\le |B(\vx,d-\tau n/2)|=Vol_q(n,d-\tau n/2).
\end{align*}
Therefore,
\begin{align}\label{eqa:S}
 \fr{|S|}{D}\le\fr{Vol_q(n,d-\tau n/2)}{Vol_q(n,d-1)}\le e^{-\Omega_{\tau,q}(1)\cdot n},
\end{align}
where the last inequality follows by the first equation of \cref{lem:vol q}.

To prove the second inequality, note that for every $C(\vy)\in T$, we have
\begin{align*}
N_{\Gamma}(C(\vy))
&=N(C(\vx))\cap N(C(\vy))\\
&=\{C(\vv)\in V\setminus\{C(\vx),C(\vy)\}:d(C(\vv),C(\vx))\le d-1,~d(C(\vv),C(\vy))\le d-1\}\\
&=\{C(\vv)\in V\setminus\{C(\vx),C(\vy)\}:\min_{0\le i\le n-1}d(\vv,\pi_i(\vx))\le d-1, \min_{0\le j\le n-1}d(\vv,\pi_j(\vy))\le d-1\}\\
&=\{C(\vv)\in V\setminus\{C(\vx),C(\vy)\}:\exists~0\le i,j\le n-1~\text{s.t.}~\vv\in B(\pi_i(\vx),d-1)\cap B(\pi_j(\vy),d-1)\},\\
&\subseteq\big\{C(\vv)\in V:\vv\in\bigcup_{0\le i,j\le n-1} B(\pi_i(\vx),d-1)\cap B(\pi_j(\vy),d-1)\big\},
\end{align*}
which implies that
\begin{align*}
    |N_{\Gamma}(C(\vy))|\le\sum_{0\le i,j\le n-1} |B(\pi_i(\vx),d-1)\cap B(\pi_j(\vy),d-1)|.
\end{align*}
It thus follows that
\begin{equation}\label{eqa:T}
  \begin{aligned}
  \fr{\deg_{\Gamma}(C(\vy))}{D}&=\fr{|N_{\Gamma}(C(\vy))|}{Vol_q(n,d-1)}\le\fr{\sum_{0\le i,j\le n-1} |B(\pi_i(\vx),d-1)\cap B(\pi_j(\vy),d-1)|}{Vol_q(n,d-1)}\\
  &=\sum_{0\le i,j\le n-1}\frac{B(\pi_i(\vx),d-1)\cap B(\pi_j(\vy),d-1)}{Vol_q(n,d-1)}\\
  &\le\sum_{0\le i,j\le n-1} e^{-\Omega_{\tau,q}(1)\cdot d(\pi_i(\vx),\pi_j(\vy))}\\
  &\le n^2\cdot  e^{-\Omega_{\tau,q}(1)\cdot n}=e^{-\Omega_{\tau,q}(1)\cdot n},
\end{aligned}
\end{equation}
where the last two inequalities follows from the second equation of \cref{lem:vol q} and the assumption that $d(\pi_i(\vx),\pi_j(\vy))\ge d(C(\vx),C(\vy))\ge d\ge \tau n$.

The claim follows straightforwardly by combining \eqref{eqa:S} and \eqref{eqa:T}.
\end{proof}

We proceed to present the proof of \cref{thm:hc-code}.

\begin{proof}[Proof of \cref{thm:hc-code}]
Let $G_{HCC}=(V,E)$ be the graph defined in \cref{lem:implyThm1}. Applying \cref{lem: graph} in concert with \eqref{eqa:|V|} and \cref{lem:implyThm1} (ii), (iii) yields that
\begin{align*}
  \alpha(G_{HCC})&\ge(1-o(1))\cdot\fr{|V|}{D}\cdot\ln(\min\{D,K\})\\
  &\ge (1-o(1))\cdot \frac{q^n}{n} \cdot\fr{1}{Vol_q(n,d-1)}\cdot\ln(\min\{Vol_q(n,d-1),e^{\Omega_{\tau,q}(n)}\})\\
  &\ge \fr{cq^n}{Vol_q(n,d-1)},
\end{align*}
where $c$ is a constant independent of $n$. Therefore, according to \cref{lem:implyThm1} (i), there exist $(n,M,d)_q$-HCCs with $M\ge n\cdot\alpha(G_{HCC})$, completing the proof of the theorem.
\end{proof}

\section{Proof of \cref{thm:bi-cw-hc-code}}\label{sec cw-hc}

Given reals $p\in(0,1),~\ee\in(0,1/10),~\tau\in(0,p(1-p)$, the aim of \cref{thm:bi-cw-hc-code} is to show the existence of large $(n,M,d;pn)$-OOCs whenever $\tau \le d/n\le (1-\ee)p(1-p)$. For $pn\in\mathbb{Z}_+$, let $\bi{[n]}{pn}=\{\vx\in\{0,1\}^n:wt(\vx)=pn\}$. Note that an $(n,M,d;pn)$-code $C$ is an $(n,M,d;pn)$-OOC only if for every $\vx\in C$, $d(\vx)\ge d$. Similarly to the discussion at the beginning of \cref{sec qhc}, to construct large OOCs we need to show the existence of a large set $B\subseteq\bi{[n]}{pn}$ such that for each $\vx\in B$, $d(\vx)\ge d$. The proof of this result is in the spirit similar to that of \cref{lem: dle} but technically a bit more involved.

\begin{lem}\label{lem: dle-cw-b}
Let $\ee>0,~0<p<1$ be reals and $n$ be a positive integer.
Let $$B=\{\vx\in\bi{[n]}{pn}:d(\vx)>(1-\ee)np(1-p)\}.$$ Then $$|B|\ge \bi{n}{pn}\left(1-n^{3/2}\exp\big(-\fr{\Omega_{p,\ee}(1)\cdot n}{2}\big)\right).$$
\end{lem}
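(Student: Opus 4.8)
The plan is to mirror the structure of the proof of \cref{lem: dle}, replacing the uniform distribution on $[q]^n$ by the uniform distribution on $\binom{[n]}{pn}$ and the independent-coordinates model by a suitable exchangeable model amenable to a concentration inequality. Concretely, let $X=(X_1,\dots,X_n)$ be a uniformly random element of $\binom{[n]}{pn}$, i.e.\ a uniformly random binary string of weight $pn$. It suffices to show
\begin{align*}
  \Pr\big[d(X)\le (1-\ee)np(1-p)\big]\le n^{3/2}\exp\big(-\tfrac{\Omega_{p,\ee}(1)\cdot n}{2}\big),
\end{align*}
and by the union bound over the $n-1$ nontrivial shifts this reduces to showing that for each fixed $1\le i\le n-1$,
\begin{align*}
  \Pr\big[d(X,\pi_i(X))\le (1-\ee)np(1-p)\big]\le \sqrt{n}\,\exp\big(-\tfrac{\Omega_{p,\ee}(1)\cdot n}{2}\big).
\end{align*}
As in \eqref{eqa:d}, one writes $d(\vx,\pi_i(\vx))=n-\sum_{j=1}^n\delta(x_j,x_{j+i})$, so it is the quantity $g_i(\vx):=\sum_{j=1}^n\delta(x_j,x_{j+i})$ that must be shown to concentrate below $n-(1-\ee)np(1-p)=n(1-p(1-p))+\ee np(1-p)$. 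A short computation with indicator variables gives $\E[\delta(X_j,X_{j+i})]=\Pr[X_j=X_{j+i}]=1-2p(1-p)\cdot\frac{n}{n-1}\cdot\frac{pn-1}{pn}\cdot\frac{(1-p)n-1}{(1-p)n}\cdot(\text{exact ratio})$; the point is only that $\E[g_i(X)]=n(1-2p(1-p))+O(1)$, comfortably below the threshold by a $\Theta(\ee n)$ margin, so what remains is a concentration statement.

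The main obstacle is that the coordinates of $X$ are not independent, so \cref{thm: mc} does not apply directly; this is exactly the ``technically a bit more involved'' remark in the text. I see two routes. The first is to work on the symmetric group: sample a uniform permutation $\sigma\in S_n$ and set $X=\sigma\cdot\vm$ where $\vm$ is the fixed weight-$pn$ string $1^{pn}0^{(1-p)n}$. There is a McDiarmid-type bounded-differences inequality for uniformly random permutations (swapping two values of $\sigma$ changes a function by a controlled amount); since $g_i$ as a function of $\sigma$ changes by $O(1)$ under a transposition — swapping the images of two positions alters only $O(1)$ of the summands $\delta(x_j,x_{j+i})$ — this yields $\Pr[g_i(X)-\E g_i(X)\ge \ee np(1-p)]\le\exp(-\Omega_{p,\ee}(1)\cdot n)$, which is even stronger than needed (no $\sqrt n$ factor). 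The second route, which explains the $\sqrt n$ in the statement, is a Poissonization / comparison argument: let $Y\in\{0,1\}^n$ have i.i.d.\ $\mathrm{Bernoulli}(p)$ coordinates; then $X$ is distributed as $Y$ conditioned on $wt(Y)=pn$, the conditioning event has probability $\Theta(1/\sqrt n)$ by Stirling, hence $\Pr[d(X,\pi_i(X))\le t]\le \sqrt n\cdot O(1)\cdot\Pr[d(Y,\pi_i(Y))\le t]$, and the latter is bounded by $\exp(-\Omega_{p,\ee}(1)\cdot n)$ exactly as in \cref{lem: dle} via \cref{thm: mc} applied to the independent coordinates of $Y$ (each $Y_j$ affects only $\delta(y_{j-i},y_j)$ and $\delta(y_j,y_{j+i})$, so the Lipschitz constants are $c_j\le 2$).

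I would present the second route, since it most closely parallels \cref{lem: dle} and the paper already has all the needed machinery. The steps in order: (1) reduce via the union bound to a single shift $i$; (2) introduce the Bernoulli($p$) product measure and record that $\Pr[wt(Y)=pn]=\Theta(n^{-1/2})$, so conditioning on the weight inflates probabilities by at most $O(\sqrt n)$; (3) apply \cref{thm: mc} to $f(y_1,\dots,y_n)=d(\vy,\pi_i(\vy))=n-\sum_j\delta(y_j,y_{j+i})$ with $c_j=2$ and $\E f=n(1-(p^2+(1-p)^2))=2np(1-p)$, obtaining $\Pr[f(Y)\le 2np(1-p)-\ee np(1-p)]\le\exp(-\Omega_{p,\ee}(1)\cdot n)$; (4) translate back through the conditioning to get the bound for $X$ and sum over $i$. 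The only genuinely delicate point is the threshold bookkeeping — one must check that $(1-\ee)np(1-p)$ sits below $\E[f(Y)]=2np(1-p)$ with a linear-in-$n$ gap of order $\ee n$, which holds since $2-(1-\ee)=1+\ee>0$; everything else is routine. (A remark can be added, as in \cref{rem:NXY}, that the permutation route removes the spurious $n^{3/2}$ and replaces it by $n$, but this is not needed for the subsequent argument.)
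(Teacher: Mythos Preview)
Your proposal is correct and your ``second route'' is essentially identical to the paper's proof: introduce i.i.d.\ $\mathrm{Bernoulli}(p)$ coordinates $Y$, use $\Pr[wt(Y)=pn]=\Theta_p(n^{-1/2})$ via Stirling to bound the conditional probability by $O(\sqrt{n})$ times the unconditional one, and then apply \cref{thm: mc} to $f(\vy)=d(\vy,\pi_i(\vy))$ with $c_j=2$ and $\E f=2np(1-p)$, exactly as in \cref{lem: dle}. The paper carries out precisely these steps with the deviation $t=(1+\ee)np(1-p)$, matching your threshold bookkeeping.
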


\begin{proof}
Let $X=(X_1,\ldots,X_n)\in\{0,1\}^n$ be a random vector, where $X_1,\ldots,X_n$ are $n$ independent Bernoulli random variables such that for $1\le i\le n$, $\Pr[X_i=1]=p$ and $\Pr[X_i=0]=1-p$. For the ease of notation, let $\mu=np(1-p)$. By the law of conditional probability,
\begin{align*}
  \Pr[d(X)\le (1-\ee)\mu|wt(\vx)=pn]=\fr{\Pr[d(X)\le (1-\ee)\mu\wedge wt(\vx)=pn]}{\Pr[wt(\vx)=pn]}=\fr{\bi{n}{pn}-|B|}{\bi{n}{pn}}.
\end{align*}
 To prove the lemma, it is enough to show that
\begin{align*}
    \Pr[d(X)\le (1-\ee)\mu|wt(\vx)=pn]\le n^{3/2}\exp\big(-\fr{\Omega_{p,\ee}(1)\cdot n}{2}\big).
\end{align*}
Recall that $d(X)=\min\{1\le i\le n-1:d(X,\pi_i(X))\}$, where $\pi_i(X)=(X_{i+1},\ldots,X_i)$. Then,
\begin{align*}
&\Pr[d(X) \le (1-\ee)\mu|wt(X)=pn]\\
    = &\Pr[\exists~1\le i\le n-1~\text{s.t.}~d(X,\pi_i(X))\le (1-\ee)\mu|wt(X)=pn]\\
    \le&\sum_{i=1}^{n-1} \Pr[d(X,\pi_i(X))\le (1-\ee)\mu|wt(X)=pn]\\
    =&\sum_{i=1}^{n-1} \fr{\Pr[d(X,\pi_i(X))\le (1-\ee)\mu\wedge wt(\vx)=pn]}{\Pr[wt(\vx)=pn]}\\
    \le&\sum_{i=1}^{n-1} \fr{\Pr[d(X,\pi_i(X))\le (1-\ee)\mu]}{\Pr[wt(X)=pn]}\\
    \le&\left(\sum_{i=1}^{n-1} \Pr[d(X,\pi_i(X))\le (1-\ee)\mu]\right)\cdot \sqrt{2\pi np(1-p)}\cdot \ell(n),
\end{align*}
with $\ell(n)=\exp(-\fr{1}{12n+1}+\fr{1}{12pn}+\fr{1}{12(1-p)n})$, where the first inequality follows by the union bound, and the last inequality follows from the fact that
 $$\Pr[wt(X)=pn]=\binom{n}{pn}p^{pn}(1-p)^{(1-p)n}$$
 and Stirling's approximation
 $$\binom{n}{pn}p^{pn}(1-p)^{(1-p)n}\ge \fr{1}{\sqrt{2\pi np(1-p)}}\cdot\exp(\fr{1}{12n+1}-\fr{1}{12pn}-\fr{1}{12(1-p)n}).$$

According to the discussion above, to prove the lemma it suffices to show that for every fixed $1\le i\le n-1$,
$$\Pr[d(X,\pi_i(X))\le (1-\ee)\mu]\le\exp(-\fr{\Omega_{p,\ee}(1)\cdot n}{2}).$$
Similarly to the proof of \cref{lem: dle}, we will apply \cref{thm: mc} with $f(x_1,\ldots,x_n):=d(\vx,\pi_i(\vx))=n-\sum_{j=1}^{n}\delta(x_j,x_{j+i})$. As for every $i\neq j$, $\E[\delta(X_i,X_j)]=p^2+(1-p)^2$, by the linearity of expectation we have that $\E[f(X_1,X_2,\ldots,X_n)]=2np(1-p)=2\mu$.
Given \eqref{eqa:lipchitz}, applying \cref{thm: mc} to $f$ yields that
\begin{align*}
    \Pr\left[d(X,\pi_i(X))-2\mu \le -(1+\ee)\mu\right]\le \exp \big(- \frac{(1+\ee)^2p^2(1-p)^2n}{2} \big)=\exp\big(-\fr{\Omega_{p,\ee}(1)\cdot n}{2}\big),
\end{align*}
and then we have arrived at the desired conclusion.
\end{proof}

Similarly to the discussion in \cref{rem:large-V}, the set $B$ obtained by \cref{lem: dle-cw-b} can be partitioned into a family $V$ of pairwise disjoint equivalent classes, where
\begin{align}\label{eqa:|V|-2}
    |V|=\frac{|B|}{n}\ge \frac{1}{n}\cdot \binom{n}{pn}\left(1-n^{3/2}\exp(-\fr{\Omega_{p,\ee}(1)n}{2})\right).
\end{align}

To prove \cref{thm:bi-cw-hc-code}, we will construct an auxiliary graph on the vertex set $V$, as detailed below.

\begin{lem}\label{lem:implyThm2}
  Given $d\in[\tau n, (1-\ee)np(1-p)]$, let $G_{OOC}=(V,E)$ be a graph constructed as below, where $V$ is the family of equivalent classes that partition $B$, and two distinct vertices $C(\vx),C(\vy)\in V$ are connected if and only if $d(C(\vx),C(\vy))\le d-1$. Then the following holds:
  \begin{itemize}
      \item [{\rm (i)}] for every independent set $I\subseteq V$, the set of vectors $ C:=\bigcup_{C(\vx)\in I} C(\vx)$ forms an $(n,n|I|,d;pn)$-OOC;
      \item [{\rm (ii)}] the maximum degree of $G_{OOC}$ is at most $D:=Vol(n,d-1;pn)$;
      \item [{\rm (iii)}] for every vertex of $G_{OOC}$, its neighborhood induces at most $\frac{D^2}{K}$ edges, where $K=e^{\Omega_{\tau,\ee}(1)\cdot n}$.
  \end{itemize}
\end{lem}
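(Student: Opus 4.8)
The plan is to mirror the proof of \cref{lem:implyThm1} almost line by line, replacing everywhere the generic Hamming ball $B(\vx,d-1)$ by the constant weight Hamming ball $B(\vx,d-1;pn)$ and \cref{lem:vol q} by \cref{lem:vol cw}, and using the set $B$ of \cref{lem: dle-cw-b} together with its partition $V$ (with $|V|$ bounded as in \eqref{eqa:|V|-2}) in place of $A$ and its partition. The one structural fact that makes this legitimate is that cyclic shifts preserve Hamming weight, so every vector that appears below still has weight $pn$; in particular all the constant weight balls are well defined and each has volume exactly $Vol(n,d-1;pn)$, and $B$ is itself an OOC (with minimum distance $1$), so it does partition into a family $V$ of pairwise disjoint equivalence classes each of size $n$ exactly as in \cref{rem:large-V}.

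For (i) I would argue verbatim as in \cref{lem:implyThm1}(i): each class $C(\vx)\in V\subseteq B$ consists of $n$ distinct weight-$pn$ vectors, so $C$ is an $(n,n|I|,d;pn)$-code; the classes in $V$ are pairwise disjoint, whence $|C|=n|I|$; and to check $d(C)\ge d$ one splits into the case $\{\boldsymbol{c}^1,\boldsymbol{c}^2\}\subseteq C(\vx)$, where $d(\boldsymbol{c}^1,\boldsymbol{c}^2)\ge d(\vx)>(1-\ee)np(1-p)\ge d$ by the definition of $B$ and the hypothesis $d/n\le(1-\ee)p(1-p)$, and the case $\boldsymbol{c}^1\in C(\vx)$, $\boldsymbol{c}^2\in C(\vy)$ with $C(\vx)\neq C(\vy)\in I$, where $d(\boldsymbol{c}^1,\boldsymbol{c}^2)\ge d(C(\vx),C(\vy))\ge d$ since $I$ is independent. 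For (ii), the same counting gives $\deg(C(\vx))=|\{C(\vy)\in V\setminus\{C(\vx)\}:\exists\,i\in[n]\text{ s.t. }\pi_i(\vy)\in B(\vx,d-1;pn)\}|\le|B(\vx,d-1;pn)|=D$.

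For (iii), fix a vertex $C(\vx)$, let $\Gamma$ be the subgraph induced by its neighborhood, and partition $V(\Gamma)=S\cup T$ with $S=\{C(\vy)\in V(\Gamma):d(C(\vx),C(\vy))\le d-\tau n/2\}$ and $T=\{C(\vy)\in V(\Gamma):d-\tau n/2<d(C(\vx),C(\vy))\le d-1\}$. Exactly as in \cref{claim:sparse-graph-1}, one shows $|S|\le Vol(n,d-\tau n/2;pn)$, so $|S|/D\le 2e^{-\Omega_{\ee,\tau}(1)n}$ by the first estimate of \cref{lem:vol cw}; and for $C(\vy)\in T$ one bounds $|N_\Gamma(C(\vy))|\le\sum_{0\le i,j\le n-1}|B(\pi_i(\vx),d-1;pn)\cap B(\pi_j(\vy),d-1;pn)|$, so that $\deg_\Gamma(C(\vy))/D\le n^2\cdot 2e^{-\Omega_\ee(1)\cdot d(C(\vx),C(\vy))}\le n^2\cdot 2e^{-\Omega_\ee(1)\tau n}=e^{-\Omega_{\ee,\tau}(1)n}$ by the second estimate of \cref{lem:vol cw} and the bound $d(\pi_i(\vx),\pi_j(\vy))\ge d(C(\vx),C(\vy))\ge d\ge\tau n$. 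Combining via $|E(\Gamma)|\le\frac12\bigl(|S|\cdot D+|T|\cdot D/K\bigr)\le D^2/K$ gives (iii) with $K=e^{\Omega_{\tau,\ee}(1)n}$. I do not expect a genuine obstacle: the only point requiring a moment of care is that the constant weight volume ratio in \cref{lem:vol cw} carries an extra factor $2$ and the larger exponent $d(\vx,\vy)+d(\vx,\vy)^2/(d-1)$ rather than $d(\vx,\vy)$, but since we only need an upper bound of the form $e^{-\Omega_{\ee,\tau}(1)n}$ after summing over the $n^2$ pairs of shifts, the factor $2$ and the extra nonnegative term in the exponent are harmless (indeed they only help). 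The substantive work was already carried out in \cref{lem: dle-cw-b} and in \cref{lem:vol cw}.
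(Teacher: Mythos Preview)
Your proposal is correct and matches the paper's own proof (given in the Appendix) essentially line for line: parts (i) and (ii) are dismissed as identical to \cref{lem:implyThm1}, and for (iii) the constant-weight analogue of \cref{claim:sparse-graph-1} is established by the same $S$/$T$ partition, invoking \cref{lem:vol cw} in place of \cref{lem:vol q}. One minor slip (which the paper itself also makes): for $C(\vy)\in T$ the inequality $d(C(\vx),C(\vy))\ge d$ cannot hold since $C(\vy)$ is a neighbor of $C(\vx)$; the correct lower bound is $d(C(\vx),C(\vy))>d-\tau n/2\ge\tau n/2$, which is still $\Omega_\tau(n)$, so the conclusion is unaffected.
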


As the proof of \cref{lem:implyThm2} is very similar to that of \cref{lem:implyThm1}, we will put it in Appendix.

Next we present the proof of \cref{thm:bi-cw-hc-code}.

\begin{proof}[Proof of \cref{thm:bi-cw-hc-code}]
Let $G_{OOC}=(V,E)$ be the graph defined in \cref{lem:implyThm2}. Then, applying \cref{lem: graph} in concert with \eqref{eqa:|V|-2} and \cref{lem:implyThm2} (ii), (iii) yields that
\begin{align*}
  \alpha(G_{OOC})&\ge(1-o(1))\cdot\fr{|V|}{D}\cdot\ln(\min\{D,K\})\\
  &\ge (1-o(1))\cdot \frac{1}{n}\cdot\bi{n}{pn}\cdot\fr{1}{Vol(n,d-1;pn)}\cdot\ln(\min\{Vol(n,d-1;pn),e^{\Omega_{\tau,\ee}(n)}\})\\
  &\ge \fr{c\bi{n}{pn}}{Vol(n,d-1;pn)},
\end{align*}
where $c$ is a constant independent of $n$. Therefore, according to \cref{lem:implyThm2} (i), there exist $(n,M,d;pn)$-OOCs with $M\ge n\cdot \alpha(G_{OOC})$, completing the proof of the theorem.
%
\end{proof}

\section{Conclusions}\label{sec:con}

In this paper we present improved GV-type bounds for hopping cyclic codes and optical orthogonal codes, which in turn give improved GV-type bounds for non-linear cyclic codes and non-linear constant weight cyclic codes, respectively. Several interesting problems remain open.


\begin{qu}
It is known that for prime power $q\ge 49$, there exist a class of codes, called algebraic geometry codes, that can significantly improve the GV bound \eqref{eqa:GV} by an exponential power $\exp(\Omega(n))$ ( see \cite{Hoholdt-AG-book,Tsfasman-AG-paper}). Does such an improvement hold also for (non-linear) cyclic codes? What about constant weight cyclic codes?
\end{qu}

\begin{qu}
All of the results mentioned in this paper are non-constructive. Can we explicitly construct a class of asymptotically good (non-linear) cyclic codes? Or more precisely, can we present Zyablov-type bounds for (non-linear) cyclic codes? What about constant weight cyclic codes?
\end{qu}


\section*{Acknowledgements}

The first two authors would like to thank Prof. Yanxun Chang for an inspiring talk on constant weight cyclic codes.

\normalem

\section{Appendix: proof of \cref{lem:implyThm2}}\label{sec:app}

\begin{proof}[Proof of \cref{lem:implyThm2}]
Note that given the proofs of  the first two items of \cref{lem:implyThm1}, \cref{lem:implyThm2} (i) and (ii) can be proved easily, so we omit both proofs for clarity. To prove (iii), let $\Gamma,S,$ and $T$ be defined as in the proof of \cref{lem:implyThm1}. Then, it suffices to show that \cref{claim:sparse-graph-1} holds also under the setting of \cref{lem:implyThm2}.

\begin{claim}[restatement of \cref{claim:sparse-graph-1}]
There exists some $K=e^{\Omega_{\tau,\ee,p}(1)\cdot n}$ such that $|S|\le D/K$, and for every vertex $C(\vy)\in T$, $\deg_{\Gamma}(C(\vy))\le D/K$.
\end{claim}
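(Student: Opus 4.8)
The plan is to re-run, essentially verbatim, the argument that proved \cref{claim:sparse-graph-1} inside the proof of \cref{lem:implyThm1}, with the constant-weight volume bounds of \cref{lem:vol cw} playing the role of the generic bounds of \cref{lem:vol q}. Throughout, $D = Vol(n,d-1;pn)$, every vertex of $\Gamma$ is a neighbour of $C(\vx)$ so $|V(\Gamma)|\le D$, and $d\ge\tau n$ by hypothesis; also recall that distinct classes $C(\vy)\in V$ consist of $n$ distinct vectors of $\bi{[n]}{pn}$, which is what makes the volume-counting below valid.

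First I would bound $|S|$. Exactly as in \cref{lem:implyThm1}, a class $C(\vy)$ lies in $S$ only if some cyclic shift $\pi_i(\vy)$ belongs to the constant-weight ball $B(\vx,d-\tau n/2;pn)$, so $|S|\le Vol(n,d-\tau n/2;pn)$, and dividing by $D$ and applying the first inequality of \cref{lem:vol cw} gives
\[
\frac{|S|}{D}\le\frac{Vol(n,d-\tau n/2;pn)}{Vol(n,d-1;pn)}\le 2e^{-\Omega_{\ee,\tau}(1)\cdot n}.
\]
Next, for $C(\vy)\in T$, the identical chain of inclusions used in \cref{lem:implyThm1} shows that $N_\Gamma(C(\vy))$ is contained in the set of classes $C(\vv)$ with $\vv\in\bigcup_{0\le i,j\le n-1}B(\pi_i(\vx),d-1;pn)\cap B(\pi_j(\vy),d-1;pn)$, whence
\[
\frac{\deg_\Gamma(C(\vy))}{D}\le\sum_{0\le i,j\le n-1}\frac{|B(\pi_i(\vx),d-1;pn)\cap B(\pi_j(\vy),d-1;pn)|}{Vol(n,d-1;pn)}.
\]
By the second part of \cref{lem:vol cw} each summand is $2e^{-\Omega_\ee(1)(d(\pi_i(\vx),\pi_j(\vy))+d(\pi_i(\vx),\pi_j(\vy))^2/(d-1))}\le 2e^{-\Omega_\ee(1)\cdot d(\pi_i(\vx),\pi_j(\vy))}$, where I simply discard the nonnegative quadratic term in the exponent. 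Since $C(\vy)\in T$ and $d\ge\tau n$, every pair of shifts satisfies $d(\pi_i(\vx),\pi_j(\vy))\ge d(C(\vx),C(\vy))>d-\tau n/2\ge \tau n/2$, so summing the $n^2$ terms yields $\deg_\Gamma(C(\vy))/D\le 2n^2 e^{-\Omega_\ee(1)\tau n/2}=e^{-\Omega_{\ee,\tau}(1)\cdot n}$.

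Finally, take $K$ to be the smaller of the two exponential quantities above; then $K=e^{\Omega_{\tau,\ee}(1)\cdot n}$ (in particular of the claimed form $e^{\Omega_{\tau,\ee,p}(1)\cdot n}$), and $|S|\le D/K$ and $\deg_\Gamma(C(\vy))\le D/K$ for all $C(\vy)\in T$; one may further shrink $K$ so that $1\le K\le D^2+1$ at no cost. This is exactly the claim, and feeding it into $|E(\Gamma)|\le\tfrac12(|S|\cdot D+|T|\cdot D/K)\le D^2/K$ then completes item (iii) of \cref{lem:implyThm2}. I do not expect a genuine obstacle here: the only deviations from the generic computation are the harmless factor $2$ in \cref{lem:vol cw} (absorbed into the $\Omega$) and the extra favourable quadratic term in its second inequality (which we simply drop); the single point to state with care is that $d(\pi_i(\vx),\pi_j(\vy))\ge d(C(\vx),C(\vy))$ holds by the very definition of $d(C(\vx),C(\vy))$ as the minimum of $d(\pi_i(\vx),\pi_j(\vy))$ over all shifts, so no new argument is needed there.
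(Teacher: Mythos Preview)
Your proposal is correct and follows essentially the same approach as the paper: bound $|S|$ via the first inequality of \cref{lem:vol cw}, bound $\deg_\Gamma(C(\vy))$ by summing the $n^2$ pairwise intersection volumes and applying the second inequality of \cref{lem:vol cw}, then absorb the polynomial factors into the exponential. Your lower bound $d(\pi_i(\vx),\pi_j(\vy))>d-\tau n/2\ge\tau n/2$ is in fact stated more carefully than the paper's own ``$d(C(\vx),C(\vy))\ge d$'' (which is a slip, since $C(\vy)\in T$ forces $d(C(\vx),C(\vy))\le d-1$), and your explicit discarding of the quadratic term is harmless and matches what the paper effectively does.
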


To prove the first part of the claim, note that
\begin{align}\label{eqa:S2}
    \frac{|S|}{D}=\fr{|\{C(\vy)\in V(\Gamma):d(C(\vx),C(\vy))\le d-\tau n/2\}|}{D}\le\fr{Vol(n,d-\tau n/2;pn)}{Vol(n,d-1;pn)}\le2e^{-\Omega_{\tau,\ee}(1)n},
\end{align}
where the first inequality follows from the same technique that proves the first inequality in \eqref{eqa:S}, and the second inequality follows by the first equation of \cref{lem:vol cw}.

As for the second part of the claim, using the same strategy that proves \eqref{eqa:T} one can show that for every $C(\vy)\in T$,

\begin{equation}\label{eqa:T2}
  \begin{aligned}
  \fr{\deg_{\Gamma}(C(\vy))}{D}&=\fr{|N_{\Gamma}(C(\vy))|}{Vol(n,d-1;pn)}\le\fr{\sum_{0\le i,j\le n-1} |B(\pi_i(\vx),d-1)\cap B(\pi_j(\vy),d-1)|}{Vol(n,d-1;pn)}\\
  &=\sum_{0\le i,j\le n-1}\frac{B(\pi_i(\vx),d-1;pn)\cap B(\pi_j(\vy),d-1)}{Vol_q(n,d-1;pn)}\\
  &\le\sum_{0\le i,j\le n-1} e^{-\Omega_{\ee}(1)\cdot (d(\pi_i(\vx),\pi_j(\vy))+d(\pi_i(\vx),\pi_j(\vy))^2/(d-1))}\\
  &\le n^2\cdot  e^{-\Omega_{\tau,\ee,p}(1)\cdot n}=e^{-\Omega_{\tau,\ee,p}(1)\cdot n},
\end{aligned}
\end{equation}
where the last two inequalities follows from the second equation of \cref{lem:vol cw} and the assumption that $d(\pi_i(\vx),\pi_j(\vy))\ge d(C(\vx),C(\vy))\ge d\ge \tau n$.

The claim follows straightforwardly by combining \eqref{eqa:S2} and \eqref{eqa:T2}.
\end{proof}

\end{document}